\documentclass[
    aps,
    reprint,
    amsmath,
    amssymb,
    superscriptaddress,
    longbibliography
]{revtex4-2}

\usepackage{graphicx}
\usepackage{bm}
\usepackage[colorlinks=true,allcolors=blue]{hyperref}
\usepackage{physics}
\usepackage{enumitem}
\usepackage{amsthm}
\usepackage{bbm}
\usepackage{float}
\makeatletter
\let\newfloat\newfloat@ltx
\makeatother
\usepackage{algorithm}
\usepackage{algpseudocode}

\newtheorem{theorem}{Theorem}
\newtheorem{lemma}{Lemma}

\usepackage[normalem]{ulem}

\begin{document}

\title{Magic-Free Certification of Logical Error Rates in Universal Fault-Tolerant Circuits}

\author{Gebrat Porta-San Miguel}
\email{gebrat.porta.san.miguel@uni-hamburg.de}
\affiliation{Institute for Quantum Physics, University of Hamburg, Luruper Chaussee 149, 22761 Hamburg, Germany}

\author{Shahram Panahiyan}
\affiliation{Institute for Quantum Physics, University of Hamburg, Luruper Chaussee 149, 22761 Hamburg, Germany}
\affiliation{Max Planck Institute for the Structure and Dynamics of Matter, Luruper Chaussee 149, 22761 Hamburg, Germany}

\author{Dieter Jaksch}
\affiliation{Institute for Quantum Physics, University of Hamburg, Luruper Chaussee 149, 22761 Hamburg, Germany}
\affiliation{The Hamburg Centre for Ultrafast Imaging, University of Hamburg, Luruper Chaussee 149, 22761 Hamburg, Germany}
\affiliation{Clarendon Laboratory, University of Oxford, Parks Road, Oxford OX1 3PU, United Kingdom} 

\begin{abstract}

As quantum hardware enters the fault-tolerant regime, algorithms are beginning to be executed on encoded qubits, yet scalable certification of how well they run remains an open problem. Here we introduce \emph{magic-free certification}, a certification protocol for universal fault-tolerant circuits that consumes no magic, the resource that generally places them beyond efficient classical simulation and is by far the most expensive element of a fault-tolerant computation. The protocol is code- and decoder-agnostic, runs interleaved with the target circuit on the same device, and has a sample complexity that grows linearly with the number of logical qubits and non-Clifford gates while remaining decoupled from the logical error rate itself. Given a predefined error tolerance $p_L^*$, the protocol determines at a chosen confidence level whether the device executes the target circuit at a logical error rate below $p_L^*$. The method relies on \emph{trap circuits}, copies of the target with every magic state replaced by a stabilizer state, whose outcomes are deterministic and which, being indistinguishable from the target to stochastic Pauli noise, fail at the same rate. We illustrate the protocol in Steane-code simulations of a magic state injection gadget and a quantum Fourier transform.

\end{abstract}

\maketitle

\section{Introduction}

Quantum computing is transitioning from the era of noisy intermediate-scale quantum (NISQ) devices \cite{Preskill2018} towards the era of fault
tolerance, in which quantum information is encoded in logical qubits and
processed by error-corrected operations \cite{Shor1995, Steane1996,KnillLaflamme1997,Aharonov1997, Knill1998, Preskill1998}. Below the fault-tolerance threshold, quantum error correction (QEC) can suppress logical error rates exponentially with the code distance. This is no longer a theoretical prospect, since small encoded logical circuits, and even some fault-tolerant primitives, have already been demonstrated across platforms \cite{Anderson2021, Postler2022, Ye2023, Bluvstein2023, Gupta2024,mayer2024benchmarkinglogicalthreequbitquantum, 
GoogleQuantumAI2024,Lacroix2024,
SalesRodriguez2025, CodeSwitching2025, Dasu2025, Bluvstein2026, Zhang2026}. Fault-tolerant circuits are thus beginning to be executed, with their performance so far assessed by methods such as tomographic reconstruction and fidelity estimation, which are viable at current system sizes but will become inefficient as fault-tolerant devices scale.

Ensuring that a quantum device operates correctly, in terms of the accuracy of its output, is the task known as \emph{certification}, or sometimes \emph{verification} \cite{eisert2020, KlieschRoth2021,Carrasco2021perspectives}. For a logical circuit, that accuracy can be quantified by its average logical error rate. The difficulty in such a task is that a reference is needed, something known independently of the device against which its output can be compared. For stabilizer circuits the reference comes at essentially no cost. A stabilizer circuit is one built entirely from Clifford gates and acting on stabilizer states, the states completely specified by a set of commuting Pauli operators that leave them invariant. Such circuits can be simulated efficiently on a classical computer \cite{Gottesman1998, Aaronson2004}, so their ideal output can be computed before the experiment is run and compared directly with what the device returns. However, for universal circuits, which are the ones capable of a computational advantage, no efficient classical method for computing the ideal output is known in general. Universality requires gates outside the Clifford group, and it is precisely these non-Clifford elements that make the classical cost of computing the ideal output grow exponentially.

The lack of an efficiently computable ideal output also constrains methods such as quantum state tomography \cite{Hradil2004} and fidelity estimation \cite{Flammia2011, daSilva2011, huang2020}. These apply to arbitrary circuits, but they reconstruct non-deterministic statistics, so that resolving a small error rate against quantum projection noise demands a number of shots that grows as that rate decreases, on top of resources scaling exponentially with system size \cite{Haah2017}. What is needed is therefore a certificate that does not require the ideal output, whose cost does not grow as the device improves, and which scales gracefully with the size of the circuit.

In a fault-tolerant architecture, those same non-Clifford gates are also the most expensive part of the computation. No quantum error-correcting code admits a transversal implementation of a universal gate set
\cite{EastinKnill2009}, so they cannot be applied directly on encoded qubits
and must instead be supplied by consuming a pre-prepared resource state, known as a \emph{magic state}, which teleports the gate onto the data
\cite{Gottesman1999, Knill2004, Bravyi2005, Campbell2017}. In this context, \emph{magic} is the resource that quantifies how far a quantum state is from being a stabilizer state \cite{Veitch2014, Howard2017}. Magic states lie outside the set of states a fault-tolerant stabilizer circuit can produce, and require dedicated distillation or cultivation routines \cite{Bravyi2012,Litinski2019,gidney2024} whose cost dominates the resource budget of the machine. A certification method that consumes magic of its own therefore competes directly with the computation for its most expensive resource. 

Several approaches do not certify the circuit directly, but characterize individual components or the noise affecting the device, from which the behavior of a given circuit must then be inferred. Gate- and layer-level benchmarking, which includes randomized benchmarking \cite{Emerson2005, Knill2008, Magesan2011, Magesan2012IRB, Onorati2019,Proctor2022mirror} and its non-Clifford extensions \cite{CD2015, Cross2016, Harper2017}, cycle benchmarking \cite{Erhard2019}, and error-per-layer characterization \cite{Hurant2026}, extracts a fidelity for a gate or a circuit layer by averaging over random sequences, and it has already been applied at the logical level \cite{Lacroix2024,mayer2024benchmarkinglogicalthreequbitquantum}. All this methods, however, return a component-wise figure of merit. Noise-model learning reconstructs the error channel itself, either from dedicated experiments \cite{FlammiaWallman2020, Harper2020, vandenBerg2023}, or directly from the syndrome data that error correction
already produces \cite{Wagner2022paulichannelscanbe, Wagner2023, Zheng2026,
xiao2026, Pancotti2026}. This last route is remarkably cheap in shots, since it consumes data the device generates anyway, but it yields a model rather than a verdict. Statements about the circuit follow from the model rather than from the data directly, and hold only if the assumed form of the noise is correct.

A different set of approaches draws conclusions about the target circuit performance from circuits that resemble it structurally, but whose output remains classically computable. The idea originates in delegated quantum computing, where a client delegates a computation to a server that may be actively adversarial. There, trap circuits, indistinguishable to the server from the computation itself but with outcomes the client knows in advance, are
inserted among the target executions, and any deviation the server attempts can be caught by the user \cite{Broadbent2009, Fitzsimons2017, Gheorghiu2015, Mahadev2018}, with this construction recently extended to computations that consume magic \cite{Sater2026}. These ideas also apply to hardware that is noisy, rather than deliberately malicious. In accreditation, the target computation is interleaved with trap circuits whose correct output is known in advance, and the observed trap failures are converted into a bound on the error of the target's output, first for NISQ circuits \cite{Ferracin2019, Ferracin2021,Jackson2024} and recently for universal fault-tolerant ones
\cite{Mills_2026}. A separate strategy keeps the idea of resembling circuits but builds them from matchgates, a different classically simulable family that also becomes universal once an additional resource is supplied. Removing that resource leaves a circuit close to the target while being classically simulable \cite{Carrasco2024}. 

None of these approaches, however, returns a verdict against a tolerance the user fixes in advance. Each first estimates a quantity (an error rate, or a bound on it) and compares afterwards, so the measurement effort is set by the precision to which that quantity happens to be resolved rather than by the question being asked. This is what we provide. We introduce \emph{magic-free certification}, a protocol that decides whether a device executes a given fault-tolerant circuit below a logical error tolerance $p_L^*$ fixed in advance, at a confidence level $1-\delta$ the user also chooses.

As in accreditation, the decision rests on trap circuits. These are copies of the target in which every magic state is replaced by a stabilizer state, so that they contain none of the magic that is the most expensive element of a fault-tolerant computation, and reduce to Clifford circuits whose input can be prepared so that their output is deterministic. Because stochastic Pauli noise cannot distinguish a magic state from a stabilizer state, the traps are corrupted at the same rate as the target, and every flipped trap outcome is a logical error detected in a single execution. This is analogous to the \emph{magic-blindness} of Pauli noise,
introduced by Sater and Ollivier \cite{Sater2026} in the adversarial setting described above. Here it follows directly from the noise being stochastic Pauli at every circuit location, since a location-attached channel acts identically on any input state. Counting trap failures bounds the logical error rate of the target, and a single hypothesis test on the count returns the verdict. The result is a certification protocol for universal fault-tolerant circuits that is code- and decoder-agnostic, runs interleaved with the target circuit on the same device, and consumes no non-Clifford resources of its own. Its sample complexity grows linearly with the number of logical qubits and non-Clifford gates, and is independent of the logical error rate itself.

The remainder of this paper is organized as follows. Section
\ref{main_results} presents the protocol, the bound on which it rests, and the statistical test that turns that bound into a certificate. Section \ref{framework} develops the trap construction in detail, states the
assumptions, and proves the bound together with its tightness. Section \ref{protocols} derives the two operational protocols and their sample
complexities. Section \ref{numerics} illustrates both in Steane-code simulations of a magic state injection (MSI) gadget and a quantum Fourier transform (QFT). Section \ref{discussion} summarizes the work, connects
the protocol to current experiments, and outlines extensions.

\section{Main results}\label{main_results}

\begin{figure*}
    \centering
    \includegraphics[width=\textwidth]{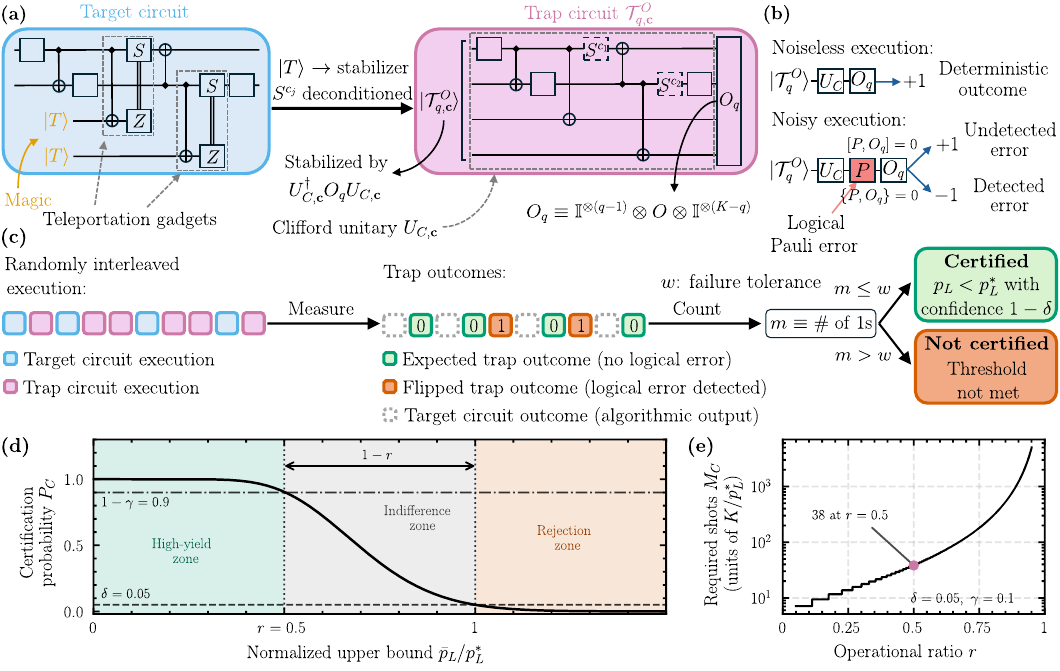}
    \caption{Magic-free certification of a universal fault-tolerant quantum circuit. (a) Trap construction. A target circuit (blue box) built from Clifford and $T$ gates, with the $T$ gates implemented fault-tolerantly via magic state injection. Each $T$ gate is teleported through a gadget consisting of a logical CNOT onto
an ancilla block prepared in $\ket{T}$, a logical $Z$-basis measurement of the ancilla with outcome $a_j$, and a classically controlled correction $S^{a_j}$. The associated trap circuits $\mathcal{T}_{q,\mathbf{c}}^O$ (pink box) substitute every injected magic state with a stabilizer resource state and apply the corrections according to a uniformly sampled correction bitstring $\mathbf{c}$ rather than to the measurement outcomes, leaving a Clifford unitary $U_{C,\mathbf{c}}$. The joint input is prepared in the $+1$ eigenspace of $U_{C,\mathbf{c}}^\dagger O_qU_{C,\mathbf{c}}$, where $O_q\equiv\mathbb{I}^{\otimes(q-1)}\otimes O\otimes\mathbb{I}^{\otimes(K-q)}$ applies the Pauli $O\in\{X,Y,Z\}$ to logical qubit $q\in\{1,\dots,K\}$, and $K$ counts the data qubits and the ancilla blocks consumed by the injections. (b) Logical error detection, with $\mathbf{c}$ omitted since the mechanism is the same for every $\mathbf{c}$. Without noise, the measurement of $O_q$ returns $+1$ deterministically. A residual logical Pauli error $P$ flips it to $-1$ if it anticommutes with $O_q$, so every flip signals a logical error,
while errors commuting with $O_q$ go undetected by that trap. (c) Certification protocol. Trap and target circuits are executed in random order on the same device, so that both sample the same time-averaged noise. Trap outcomes are recorded as 0 if the deterministic value is recovered and 1 if it is flipped, and the number $m$ of 1s over $M_C$ shots distributed among the $3K$ traps is compared against a failure tolerance $w$. The circuit is certified to satisfy $p_L<p_L^*$, for a predefined tolerance $p_L^*$, with confidence $1-\delta$ if $m\leq w$, and is left uncertified otherwise. (d) Certification probability $P_C$ against the normalized upper bound $\bar p_L/p_L^*$, for operational ratio $r=0.5$. The tolerance $w$ and the shot budget $M_C$ are calibrated so that a device in the high-yield zone ($\bar p_L\leq rp_L^*$) is certified with probability at least $1-\gamma=0.9$ (dash-dotted line), and with probability at most $\delta=0.05$ (dashed line) if it sits in the rejection zone ($\bar p_L\geq p_L^*$), while in the indifference zone between them, of width $1-r$, neither guarantee applies. The curve depends only on $\delta$, $\gamma$ and $r$, and is the same for every circuit.
(e) Shot budget $M_C$, in units of $K/p_L^*$, required by the test at operational ratio $r$, for $\delta=0.05$ and $\gamma=0.1$. Stricter values of either raise it further. The dot marks the value $r=0.5$ used in (d).}
    \label{main_figure}
\end{figure*}

We now describe the magic-free certification protocol in full, including how trap circuits are built, how their combined failure rates bound the logical error rate of the target, and how a single hypothesis test converts them into a certificate at a prescribed error tolerance and confidence level.

Figure \ref{main_figure}(a) shows the construction of trap circuits. Throughout, we take the non-Clifford elements of the target circuit to be single-qubit $\pi/4$-rotations about the $z$-axis, $R_z(\pi/4)$, commonly known as $T$ gates. The construction extends to other non-Clifford gates, as discussed in Sec. \ref{traps_subsection}. The target circuit, shown on the left of Fig. \ref{main_figure}(a), implements each $T$ gate by teleporting it through a gadget (dashed boxes) that consumes a magic state $\ket{T}=T\ket{+}= (|0\rangle + e^{i\pi/4}|1\rangle)/\sqrt{2}$ prepared on an ancilla block, measures the ancilla block, and applies a correction $S^{a_j}$ conditioned on the outcome $a_j$, which can be either 0 or 1 with equal probability, independently of the data it acts on. We denote by $K$ the total number of logical qubits of the circuit, counting both the data blocks and the blocks consumed by the injections, and label them by $q\in\{1,\dots,K\}$. In the associated trap circuits, shown on the right of Fig. \ref{main_figure}(a), every magic state is replaced by a stabilizer resource state, and the correction gates (dashed $S^{c_j}$ boxes) are applied unconditionally, according to a bitstring $\mathbf{c}$ drawn uniformly at random rather than according to the measurement record. Sampling $\mathbf{c}$ uniformly therefore reproduces the distribution of corrections the target itself realizes, while removing the classical feedforward. If the corrections are tracked in software rather than applied physically, sampling of $\mathbf{c}$ is not required, and the correction slots are removed. In any case, what remains is a Clifford unitary $U_{C,\mathbf{c}}$ (big dashed box), built from the same gates, at the same time slots, as the target, and including its full schedule of syndrome-extraction rounds.

Each trap measures a logical Pauli $O\in\{X,Y,Z\}$ on one logical qubit $q$ at the end of the circuit, that is, the operator $O_q\equiv\mathbb{I}^{\otimes(q-1)}\otimes O\otimes\mathbb{I}^{\otimes(K-q)}$. Given the target circuit, a trap is fully specified by $O$, $q$ and $\mathbf{c}$, and we therefore denote it by $\mathcal{T}^O_{q,\mathbf{c}}$. Because trap circuits are Clifford, the input states that make their output deterministic can be computed efficiently. Preparing the joint input, labeled $\ket{\mathcal{T}^O_{q,\mathbf{c}}}$ in Fig.~\ref{main_figure}(a), in a $+1$ eigenstate of $U_{C,\mathbf{c}}^\dagger O_qU_{C,\mathbf{c}}$ ensures that the measurement of $O_q$ at the end of the circuit returns a fixed $+1$ value, which we record as the bit 0 (Fig.~\ref{main_figure}(b), top). Since the outcome is fixed in the absence of noise, a flip in the expected outcome can only be caused by a physical fault that has escaped error correction and left a logical error on qubit $q$ (Fig.~\ref{main_figure}(b), bottom). We then define a trap failure as an execution in which the deterministic value is flipped. Every trap failure is therefore a detected logical error, identified in a single execution, without estimating any probability and without resolving a small signal against quantum projection noise.

The validity of trap circuits as proxies for the target, the premise of the whole protocol, rests on noise being stochastic Pauli. At every location of the target and trap circuits (gates, idle periods and measurements), such noise draws an error from a fixed distribution and applies it to whatever state is passing through. We call the set of physical errors occurring during one execution a \emph{fault configuration} $F$. Since the channel is a fixed map, the distribution from which errors are drawn does not depend on the state. Thus, the fault configurations of the target circuit, where $\ket{T}$ states are injected, are statistically identical to those occurring in trap circuits when a stabilizer state is injected instead. Substituting magic states by stabilizer states therefore leaves the fault statistics of the circuit untouched while removing its non-Clifford content. Three further assumptions complete the model: the injected magic states carry stochastic errors no more frequent than those of the stabilizer states replacing them, the noise on preparing an encoded qubit does not depend on which basis state is prepared, and the decoder is held fixed throughout the experiment.

Under the noise model above, three properties of the trap construction turn the trap outcomes into a statement about the target circuit. First, Pauli errors propagate through Clifford gates by conjugation alone, an operation that is deterministic and independent of the state acted upon, so a given fault configuration $F$ that is not corrected during QEC rounds produces the same residual logical Pauli error $P(F)$ in the target and in any of its trap variants. Second, the noise at a circuit location depends only on the operation executed there, and target and traps execute identical operations at every location except where the resource states are prepared. Since stochastic Pauli noise acts identically on a magic state and on a stabilizer state, those locations carry the same noise as well, and the fault statistics of the two circuits agree location by location. Third, a trap sees only those errors that anticommute with the Pauli it measures (Fig.~\ref{main_figure}(b)). Having prepared the input as a $+1$ eigenstate, the final measurement returns $-1$, recorded as a bit 1, precisely when the residual logical error on qubit $q$ anticommutes with $O$. One trap alone is therefore blind to a part of the error, since a trap measuring $Z$, for instance, cannot detect a residual $Z$. Each non-trivial Pauli anticommutes with
exactly two of $X$, $Y$ and $Z$, so executing three traps on every logical
qubit, one for each Pauli, registers each of that qubit's errors
twice, and half the sum of their failure rates is the probability that qubit
$q$ carries a logical error at all. The computation is corrupted if any of
the $K$ logical qubits is corrupted and, by the union bound, that probability (the logical error rate $p_L$ of the target circuit) is at most the sum of the individual ones, which yields
\begin{equation}\label{overview_bound}
    p_L\leq\dfrac{1}{2}\sum_{q=1}^K\sum_{O\in\{X,Y,Z\}}p\left(\mathcal{T}_{q}^O\right)\equiv\bar p_L,
\end{equation}
with $p\left(\mathcal{T}_{q}^O\right)$ denoting
the logical failure rate of the trap measuring $O$ on logical qubit $q$,
averaged over the correction bitstrings $\mathbf{c}$. The left-hand side is a property of a circuit whose output is neither deterministic nor efficiently predictable, whereas the right-hand side is assembled entirely from executions of stabilizer circuits on the device that runs it.

The protocol connects the trap failure counts, which are what an experiment records, to the failure rates entering Eq.~\eqref{overview_bound} through a hypothesis test, sketched in Fig.~\ref{main_figure}(c). Trap and target circuits are executed in random order on the same device, so that both sample the same time-averaged noise and the certificate refers to the conditions under which the target actually runs. The total number $m$ of flipped trap outcomes over a budget of $M_C$ shots is compared against a failure tolerance $w$. Observing $m\leq w$ certifies $p_L< p_L^*$ at confidence $1-\delta$, with $\delta$ the false positive rate, whereas observing more leaves the circuit uncertified. The tolerance $w$ and the shot budget $M_C$ are calibrated so that a device operating at $\bar p_L=p_L^*$ is wrongly certified with probability at most $\delta$, while one operating comfortably below $p_L^*$, at $\bar p_L\leq r\,p_L^*$, for an operational ratio $0<r<1$ that sets how far below the tolerance a device must operate to be reliably certified, is certified with probability at least $1-\gamma$, with $\gamma$ the false negative rate.

Figure~\ref{main_figure}(d) shows the resulting certification probability for
$r=0.5$, plotted against the normalized upper bound $\bar p_L/p_L^*$. The
curve crosses the significance level $\delta$ at the tolerance $p_L^*$ and
meets the yield level $1-\gamma$ at the operational target $rp_L^*$, leaving
between them an indifference zone, in which neither guarantee applies and the
outcome may go either way, and whose width $1-r$ is set by the choice of $r$.
The cost of the test is set by how many failures must be observed to separate
the two hypotheses, not by how precisely a small rate must be resolved, and
this is what decouples it from the quality of the device. The required
certification shot budget scales as
\begin{equation}
    M_C\sim\mathcal{O}\left(\dfrac{K}{p_L^*}\right)
\end{equation}
linear in the logical footprint $K$ and inversely proportional to the
predefined tolerance $p_L^*$, with no dependence on the hardware error rate.
Its prefactor depends only on $\delta$, $\gamma$ and $r$, and grows as the
indifference zone narrows, as shown in Fig.~\ref{main_figure}(e) and derived in Sec.~\ref{cert_protococol}. A sharper
test, with $r$ closer to one, must observe more failures to separate the two
hypotheses, and therefore requires more shots. To give a sense of scale,
consider the compilation of 2048-bit RSA factorization in
Ref.~\cite{gidney2025factor2048bitrsa}, which uses $6.5\cdot10^9$ non-Clifford
gates on 1399 logical qubits. Certifying that computation as a single object
might not be practical, since the shot budget grows linearly with the logical
footprint. Its arithmetic is built, however, from a small set of repeated
primitives, modular additions and table lookups acting on registers of a few
tens of qubits. Certifying one such primitive to a tolerance $p_L^*=10^{-3}$,
at $(\delta,\gamma,r)=(0.05,0.1,0.5)$, requires $M_C\sim10^6\text{--}10^7$
shots.

\section{Magic-free trap framework}\label{framework}

This section develops the construction summarized in Sec.~\ref{main_results}. It defines the target and trap circuits precisely by detailing the construction of the latter, states the assumptions under which they are equivalent in terms of the errors they suffer, and proves the bound of Eq.~\eqref{overview_bound} together with its tightness.

Throughout this work we operate at the logical level: circuits act on encoded qubits, gates are fault-tolerant logical operations on an underlying physical layout, errors are the residual logical Pauli operators left after decoding, and noise channels are the effective channels induced on the logical qubits by the underlying physical noise. Unless stated otherwise, all quantities refer to this level.

\subsection{Target circuits and trap construction}\label{traps_subsection}

We consider target circuits compiled into the $\text{Clifford}+T$ gate set. $T$ gates are then implemented by injection of the magic state $\ket{T}$ via the standard teleportation gadget, as shown in Fig.~\ref{main_figure}(a). The gadget consists of a CNOT with the data block as control and the ancilla block as target, followed by a $Z$-basis measurement of the ancilla block with outcome $a_j\in\{0,1\}$. When $a_j=1$, the gadget applies $T^\dagger=S^\dagger T$, where $S=R_z(\pi/2)$, that is, the desired $T$ followed by a byproduct $S^\dagger$. A classically controlled correction $S^{a_j}$ compensates the byproduct of the teleportation and restores $T$. A target circuit acting on $k_D$ logical data qubits and containing $k_T$ injected $T$ gates requires one injection ancilla per $T$ gate, so that the total logical footprint of the computation is $K=k_D+k_T$. We stress that $k_T$ counts the injected gates, not the distinct physical blocks, which may be reused across the circuit.

Trap circuits $\mathcal{T}_{q,\mathbf{c}}^O$, with $\mathbf{c}\in\{0,1\}^{k_T}$, are obtained from the target by (i) substituting every injected magic state of the target circuit by a stabilizer resource state and applying the correction $S^{c_j}$ unconditionally at the $j$th correction slot, (ii) preparing the joint input (comprising both data and injection ancilla blocks) in a stabilizer state within the $+1$ eigenspace of the trap operator $U_{C,\mathbf{c}}^\dagger O_qU_{C,\mathbf{c}}$ and (iii) measuring $O_q$ at the end of qubit $q$'s live window, i.e. the interval between its preparation and its measurement in the target circuit. Here $U_{C,\mathbf{c}}$ denotes the Clifford unitary obtained from the target circuit after the magic states are replaced by stabilizer resource states and the correction bitstring $\mathbf{c}$ is fixed. This construction follows Ref.~\cite{Sater2026}, extended here from $Z$-type traps to the full logical Pauli basis. Since each block is measured at the end of its live window, injection ancillas are measured at the same circuit time as in the target, so that the trap does not accumulate idling locations absent from it, whereas data blocks are measured at the end of the circuit.

The bitstring $\mathbf{c}$ enumerates the possible injection outcomes, and hence the correction bitstrings that the target may realize. In every trap shot the correction bitstring \textbf{c} is drawn uniformly at random and the corrections $S^{c_j}$ applied unconditionally, so that the trap ensemble reproduces the target's distribution over correction bitstrings, which requires $k_T$ random bits per shot. For small $k_T$, however, the sampling may equivalently be derandomized by executing each of the $2^{k_T}$ patterns an equal number of times, as done in Sec.~\ref{numerics}. As shown in Appendix \ref{lemma2}, the target's correction bitstring is itself uniformly distributed and independent of the faults. If corrections are tracked in software rather than applied physically, the correction slots are absent from both target and trap, and no sampling of $\mathbf{c}$ is required.

By construction, target and trap circuits share the same circuit location set: identical gate and idle locations, identical measurement times, and identical correction slots. They differ only in the injected resource states and in the prepared input state on the data blocks. The equivalence of the location set extends to error correction, since every logical block undergoes the same schedule of syndrome extraction rounds in the traps as in the target, throughout its live window, irrespective of which qubit the trap is targeting. Restricting error correction to the targeted block would break the correspondence, since faults occurring during syndrome extraction on the remaining blocks propagate onto block $q$ in the target but would never be sampled by its traps. Conversely, blocks whose live window has closed undergo no further rounds, mirroring the target.

Although we have specified the construction for the $\text{Clifford}+T$ gate set, it applies to any non-Clifford gate whose injection gadget leaves a Clifford correction, since fixing that correction according to $\mathbf{c}$, or tracking it in software, then keeps $U_{C,\mathbf{c}}$ Clifford and the trap input state efficiently computable. This holds for every gate in the third level of the Clifford hierarchy \cite{Gottesman1999}, which includes the $T$, controlled-$S$ and controlled-controlled-$Z$ gates. Gadgets that may consume several ancilla blocks, as for the latter two, simply enlarge the logical footprint and the length of $\mathbb{c}$ accordingly.

\subsection{Assumptions}\label{assumptions_sec}

Sharing the location set is not by itself enough for target and traps to suffer the same errors with the same probabilities. We further rely on a set of assumptions concerning the physical noise and the preparation of input states. We state them below as A1--A4, each followed by a brief discussion of its role and of when it is expected to hold.

\begin{enumerate}[
  label=A\arabic*.,
  ref=A\arabic*,
  leftmargin=*,
  align=left,
  itemsep=0.4em,
  topsep=0.4em,
  series=assumptionslist
]
\item Noise is Markovian and is described, at each circuit location, by a stochastic Pauli channel.

\end{enumerate}

Stochastic Pauli noise can be enforced in practice by randomized compiling
\cite{Wallman2016,Hashim2021}, in which random Pauli operations are inserted
around circuit layers so that the circuit logic is unchanged and generic noise
is twirled into a stochastic Pauli channel at each location. Although
introduced for physical circuits, randomized compiling has also been extended
to fault-tolerant encoded circuits \cite{Beale2023,Mills_2026}. Moreover, QEC itself also pushes in this direction, since repeated syndrome
extraction tends to suppress the coherent part of the noise at the logical
level \cite{Beale2018,IversonPreskill2020}. We take this noise structure as
given and do not analyze the compilation itself. Note that state-independence of the location-attached channels is not an additional assumption but a consequence of linearity, since a location-attached channel acts identically on any input state. Nevertheless, Appendix~\ref{coherent_appendix} quantifies how relaxing A1 affects the bound in Eq.~\eqref{overview_bound}, allowing each channel to retain a small coherent component.

This noise model excludes leakage and erasures, which form a significant part
of the error budget of several platforms, but these can be brought within it
once they are detected and converted into Pauli-type noise
\cite{aliferis2006,Fowler2013,suchara2014}. In neutral-atom processors, atom loss and
leakage out of the computational space can be converted into heralded erasures
\cite{wu2022erasure,Ma2023,Scholl2023,Chow2024} and handled by decoders that
exploit the loss locations \cite{Perrin2025,Baranes2026}. In trapped ions,
leakage out of the hyperfine qubit manifold can be pumped back optically
\cite{Hayes2020} or converted into erasures via metastable states
\cite{Kang2023}, and qubit loss can be detected and corrected deterministically
\cite{Stricker2020}. In superconducting circuits, leakage to higher transmon
levels is removed by multilevel reset and leakage-reduction units
\cite{McEwen2021,Miao2023,Battistel2021}, while dual-rail encodings convert
the dominant decay channel into erasures \cite{Kubica2023}. Once
converted and reset, an erasure acts on the logical state as a random Pauli
error at a known location, and the heralds enter the decoder as an additional
input.

Correlations between executions need to be considered separately. Slow noise drifts are harmless, since random interleaving makes target and traps sample
the same time-averaged noise, whereas correlations on the scale of consecutive
executions may have a negative impact on the protocol. Residual leakage or heating that persists from one shot
into the next can make the noise in a shot depend on the circuit executed
before it, breaking the equality of fault statistics. High-energy impacts cause errors across chip arrays in superconducting
processors \cite{Wilen2021,McEwen2022}, correlating failures across nearby
shots, so that the failure count is no longer a sum of independent trials and
the confidence level of the test is no longer guaranteed. The former are
addressed by resetting every qubit between executions, while the latter by hardware
mitigation such as gap engineering \cite{McEwen2024}.

\begin{enumerate}[
resume=assumptionslist,
  label=A\arabic*.,
  ref=A\arabic*,
  leftmargin=*,
  align=left,
  itemsep=0.4em,
  topsep=0.4em
]
\item The injected magic states of the target circuit carry errors of stochastic Pauli type, with a logical error rate not exceeding that of the substituted stabilizer resource states in the trap circuits.

\end{enumerate}

\noindent Together with the preparation of the input states, this is the only difference between the physical instructions of target and traps. If the injected magic states were noisier than the stabilizer states replacing them, the target would suffer faults that its traps never sample, and the bound would underestimate its error.

In fault-tolerant settings, magic states are supplied by a specialized
preparation stage, such as distillation \cite{Bravyi2005,Bravyi2012,Litinski2019}
or cultivation \cite{gidney2024}. The logical-level preparation and distillation of magic states have now been demonstrated with superconducting circuits \cite{Ye2023,Gupta2024}, neutral atoms \cite{SalesRodriguez2025} and trapped ions \cite{CodeSwitching2025,Dasu2025}. The certification of this stage
lies outside the scope of the protocol and is assumed to be performed
independently with dedicated benchmarks of logical magic states
\cite{Lee2025magicbenchmark}. If the magic states are shown to be noisier than the stabilizer states they replace, the bound of Eq.~\eqref{overview_bound} degrades only additively, by at most the excess error per injection times the number of injections. The mismatch can
also be absorbed by deliberately degrading the stabilizer resource states of
the traps until the inequality holds, at the cost of a slightly looser bound.

Stochasticity of the preparation error can be enforced by twirling the
preparation stage, a standard step in the magic-state literature
\cite{Bravyi2005,Bravyi2012} for which fault-tolerant constructions on encoded
states have also been developed \cite{Piveteau_2021,Beale2023,Mills_2026}.

\begin{enumerate}[
  resume=assumptionslist,
  label=A\arabic*.,
  ref=A\arabic*,
  leftmargin=*,
  align=left,
  itemsep=0.4em,
  topsep=0.4em
]
\item The noise channel attached to the preparation of an encoded logical state is independent of which stabilizer basis state is prepared.
\end{enumerate}

\noindent This is required because target and traps prepare different stabilizer inputs on the data blocks. Although A1 attaches a stochastic Pauli channel to each operation, the preparations of different basis states are different operations whose channels might not coincide. A3 holds by construction when every basis state is obtained from one common preparation followed by a Clifford basis change, since the preparation
is then the same operation in every case and the basis change is a separate
gate of the circuit, as is possible in codes admitting transversal Clifford
gates \cite{Steane1996,bombin2006}. However, when different basis states are prepared by different circuits, A3 becomes a hypothesis about the device to be verified for the used preparation circuits. If it fails, target and traps no longer share the same fault statistics at the preparation locations, and the bound of Eq.~\eqref{overview_bound} requires an additive correction.

\begin{enumerate}[
  resume=assumptionslist,
  label=A\arabic*.,
  ref=A\arabic*,
  leftmargin=*,
  align=left,
  itemsep=0.4em,
  topsep=0.4em
]
\item The decoder is fixed throughout the experiment.
\end{enumerate}

\noindent Errors considered here are the residual logical errors left after decoding, and those occurring in trap circuits are therefore the same occurring in the target only if both are decoded by the same map. The decoder may process any number of syndrome rounds and use side information, such as erasure heralds, provided the map from syndrome data to correction is the same in every execution. A4 therefore excludes decoders that change while the certification is running, which would be the case if decoders that are tuned to device data \cite{Bausch2024} were retrained or
recalibrated between shots. All error rates are thus defined relative to that decoder, and
the protocol certifies the device together with its decoding pipeline rather
than the device alone.

Assumptions A1--A4 are the only hypotheses entering the bound derived below, as no further requirement is imposed on the code, the circuit, or the noise.

\subsection{Logical error rate upper bound}

Throughout this section we compare the execution of the target circuit with those of its trap variants.  Under A1, the probability of each fault configuration $F$, which was the set of physical errors occurring during a single execution, is set by the noise model alone.

Each fault configuration propagates through the circuit and is partially corrected by the decoder, resulting, after decoding, in a residual logical Pauli error
\begin{equation}\label{logical_pauli_error}
    P(F)=\bigotimes_{q=1}^KP_q,
\end{equation}
where $P_q\in\{\mathbb{I},X,Y,Z\}$ is the error carried by logical qubit $q$ at the moment it is measured. The map $F\mapsto P(F)$ also depends on the correction bitstring $\mathbf{c}$ and on the decoder, which we leave implicit. We write $E_q=\{F:P_q\neq\mathbb{I}\}$ for the fault configurations leaving a logical error on qubit $q$, and $E_\text{out}$ for those corrupting the output of the target circuit. If $P(F)=\mathbb{I}$, no logical qubit carries an error when it is measured, and the target behaves as its ideal counterpart, so that
\begin{equation}
    E_\text{out} \subseteq \{F : P(F) \neq \mathbb{I}\}.
\end{equation}
Not every logical error corrupts the target circuit output. For instance, a logical $Z$ error on an injection ancilla at the moment of its $Z$-basis measurement leaves the outcome unchanged. Counting all logical errors therefore overestimates $p_L$, a gap that is analyzed in Sc.~\ref{bound_tightness}.

The bound then follows from three lemmas: a given fault configuration produces the same logical error in the target and in every trap (Lemma~\ref{same_logical_error}), target
and traps encounter fault configurations with the same statistics (Lemma~\ref{same_fault_stats}), and the three traps of a logical qubit
together register every logical error on it exactly twice (Lemma~\ref{error_twice}).

Lemma~\ref{same_logical_error} formalizes the first property anounced in Sec.~\ref{main_results}: once the correction bitstring is fixed, a fault
leaves the same logical error whatever the state it acts on.
\begin{lemma}\label{same_logical_error}
    For a fixed fault configuration $F$, correction bitstring $\mathbf{c}$, and decoder, the resulting logical error $P(F)$ is the same in the target circuit and in every trap circuit, and is independent of the logical input state.
\end{lemma}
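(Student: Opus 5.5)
The plan is to show that, once $F$, $\mathbf{c}$ and the decoder are fixed, the residual error can be computed purely in the Pauli frame, without reference to the state being processed. I will write each execution as the ideal circuit with the faults of $F$ inserted at their locations, and push every inserted Pauli to the end of its qubit's live window.

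The first step is to fix $\mathbf{c}$. Once it is fixed, the operations acting on the data in both target and traps (CNOTs, Clifford gates, syndrome-extraction gadgets and the fixed corrections $S^{c_j}$) form the same sequence of Clifford operations at the same locations, since the paper has already established that the two circuits share the same location set. The feedforward $S^{a_j}$ of the target is the one subtlety. Conditioning on the correction bitstring equal to $\mathbf{c}$ makes the target's gate sequence identical to that of the trap. The ideal outcome $a_j$ does not depend on the faults; this is the content of Appendix~\ref{lemma2}. A fault that flips the measured $a_j$ enters only through the recorded bit. I will treat this as part of how $F$ acts on the logical frame, so that the applied correction is a deterministic function of $(F,\mathbf{c})$.

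Next, I commute each Pauli in $F$ forward through the remaining Cliffords. Conjugation $C P C^\dagger$ is a fixed map on the Pauli group and does not depend on the input state. The result is a Pauli $E(F)$ on the physical qubits at the measurement times, up to a global phase. Every syndrome bit is the eigenvalue of a stabilizer generator. For an ideal codestate each such generator returns $+1$, whatever logical state is encoded, because stabilizers act trivially on the code space. A Pauli error $E$ then flips a syndrome bit exactly when $E$ anticommutes with that generator. Faulty syndrome measurements are themselves Pauli faults in $F$, so the whole syndrome record is a function of $F$ alone. The decoder is a fixed map (A4), so its correction $R(F)$ is also a function of $F$ alone. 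The residual $R(F)E(F)$ therefore lies in the normalizer of the stabilizer. Its logical class on each block defines $P_q$, and this gives $P(F)$ as in Eq.~\eqref{logical_pauli_error}, independent of the encoded state and hence identical for target and traps.

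The step I expect to be the main obstacle is the injection locations and the other non-Clifford-looking pieces. Target and trap differ in the resource state placed on the ancilla blocks and in the data inputs. I will handle this by attaching the resource preparation to the ancilla as an input state, with its noise channel as a Pauli fault located in $F$ (A2, A3). This way the difference lies only in the initial state and never in an operation through which errors propagate. The propagation argument above then applies unchanged. A second point to handle is that the $T$-gadget in the target is a measurement with classical feedforward, not a unitary. I will restate it as unitary Clifford operations followed by a computational-basis measurement, with the correction slot fixed at $\mathbf{c}$, so that no non-Clifford gate sits between a fault and the measurement.
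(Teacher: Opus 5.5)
Your proposal is correct and follows essentially the same route as the paper's proof. Fixing $\mathbf{c}$ turns the target's feedforward into the same fixed Clifford gates $S^{c_j}$ that the traps use, and Pauli faults then propagate by state-independent conjugation. The syndromes, and hence the fixed decoder's response and the residual logical Pauli, therefore depend only on $(F,\mathbf{c})$, while target and traps differ only in their input states. Your stabilizer-eigenvalue argument and your treatment of resource states as noisy inputs spell out steps the paper leaves implicit.

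One clean-up: take $\mathbf{c}$ to be the recorded (applied) bitstring. Then the correction depends on $\mathbf{c}$ alone, and a flipped ancilla readout shows up only as the ancilla component of $P(F)$. The statistical independence from Appendix~\ref{lemma2} is not needed for this deterministic lemma. Letting the applied correction depend on $F$ would also contradict your own claim that the gate sequences of target and trap coincide.
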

\begin{proof}
    Once $\mathbf{c}$ is fixed, the corrections $S^{c_j}$ are fixed gates. In the target, $\mathbf{c}$ is the record of injection outcomes, which fixes its corrections in the same way. Target and traps then execute the same Clifford gates at the same locations and differ only in the states they process. Pauli faults propagate through Clifford gates by conjugation, which does not depend on the state acted upon, so the syndromes, and hence the decoder's corrections, depend only on $F$ and $\mathbf{c}$. The map
    $F\mapsto P(F)$ is therefore the same in all executions.
\end{proof}

Lemma~\ref{same_logical_error} fixes the map from faults to logical errors. Target and traps could still fail at different rates if they encountered faults, correction bitstrings or decoders with different statistics, which the next lemma excludes everywhere except at the resource-preparation slots.
\begin{lemma}\label{same_fault_stats}
    Under A1, A3 and A4, and with $\mathbf{c}$ sampled uniformly in the traps, target and trap executions encounter correction bitstrings with the same distribution, are decoded by the same map, and sample fault configurations with the same statistics at every location except the resource-preparation slots.
\end{lemma}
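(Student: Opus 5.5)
The plan is to prove the three claims of Lemma~\ref{same_fault_stats} separately: equality of the correction-bitstring distribution, identity of the decoding map, and equality of fault statistics away from resource-preparation slots.

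The decoder claim is immediate. By A4 the map from syndrome data to correction is fixed for the whole experiment. Target and traps share the same location set, including the full schedule of syndrome-extraction rounds on every block throughout its live window, so they produce syndrome records of the same form. These records are processed by this single map. Together with Lemma~\ref{same_logical_error}, this means the decoder sees identical inputs whenever $F$ and $\mathbf{c}$ coincide.

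For the fault statistics, we go location by location. Under A1 each gate, idle or measurement location carries a fixed stochastic Pauli channel $\rho\mapsto\sum_P \Pr(P)\,P\rho P^\dagger$. Its error distribution $\Pr(P)$ depends only on the operation at that location, not on the state passing through, since a linear channel acts identically on every input. Target and traps execute the same operation at every location except two kinds: the resource slots, where $\ket{T}$ is replaced by a stabilizer state, and the data-input preparations, where different stabilizer basis states are prepared. A3 makes the channel at the data-input preparations independent of the basis state, so those locations also carry identical distributions. Markovianity in A1 lets us factor the probability of a fault configuration $F$ as a product over locations, restricted to the non-resource locations. Hence the marginal law of $F$ outside the resource slots coincides between target and trap.

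The correction-bitstring claim is the main obstacle, because in the target $\mathbf{c}$ is the measurement record $a_j$ and is a priori correlated with faults. The approach I would take is to show that each ideal outcome $a_j$ is uniform and independent of everything earlier. Measuring $Z$ on the ancilla after the CNOT from $\ket{T}$ gives outcome probabilities $|\langle \cdot\rangle|^2 = 1/2$ for any data state, because $\ket{T}$ has equal weight on $\ket0,\ket1$. A Pauli fault, or a fault-induced logical error, only conjugates the pre-measurement state by a Pauli. An $X$-component flips the recorded bit deterministically, which is a bijection of $\{0,1\}$ and preserves uniformity. A $Z$-component leaves the $Z$-outcome distribution unchanged. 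So conditioned on $F$ and on $a_1,\dots,a_{j-1}$, the $a_j$ is still uniform. By the chain rule, $\mathbf{c}$ in the target is uniform on $\{0,1\}^{k_T}$ and independent of $F$, matching the uniform independent sampling in the traps. The detail I would double-check is that the decoder's feedforward corrections do not break this. Decoder corrections are Paulis applied before the $S^{a_j}$, so the argument is unchanged; this is the content deferred to Appendix~\ref{lemma2}. Combining the three parts, the joint law of $(F,\mathbf{c})$ restricted to non-resource locations is the product of equal marginals in target and trap, which proves the lemma.
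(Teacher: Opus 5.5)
Your proposal is correct and follows essentially the same route as the paper's proof (main-text sketch and Appendix~\ref{lemma2}). The ingredients match: location-wise state-independence of the A1 channels plus A3 for the fault statistics, and A4 together with state-independent syndromes for the decoder. For the correction bitstrings you use the same decomposition of each recorded outcome into an intrinsically uniform teleportation bit XOR a fault-induced flip, and you make it slightly more explicit by checking uniformity for an arbitrary, possibly corrupted, data state.

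One small caveat, which the paper shares: the noise at a correction slot may differ between $S$ and the identity, so $\mathbf{c}$ is not in general independent of the full $F$. You should therefore condition $a_j$ only on the faults and outcomes preceding its measurement. The conclusion is then equality of the joint law of $(\mathbf{c},F)$ (uniform $\mathbf{c}$, then $F$ drawn location by location given $\mathbf{c}$), rather than a product of equal marginals.
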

\begin{proof}[Proof sketch]
    Given the correction bitstring, every location other than resource preparation carries the same physical instruction in target and trap: gate layers, idle periods, and measurements are identical by construction, and the data-block preparations differ only in the prepared basis, which carries the same noise by A3. Under A1, the channel attached to a location is fixed independently of the state passing through it, and the fault probabilities agree location by location. The correction bitstrings agree because the target's recorded injection outcomes are uniformly distributed and independent of the faults, so that they follow the same law as the bitstrings \textbf{c} sampled in the traps (Appendix \ref{lemma2}). Finally, syndromes are independent of the logical input state and the decoder is common to all executions by A4, so the decoder responses agree as well.
\end{proof}

The two executions still differ at the resource-preparation slots, where the target injects magic states and the traps stabilizer states. However, by A2 the injected magic states are not noisier than the stabilizer states they replace.

Lemmas \ref{same_logical_error} and \ref{same_fault_stats} establish the trap failure rates as faithful probes of the target's logical errors, but a single trap reports only whether the error anticommutes with the observable it measures, and is blind to the rest. Therefore, obtaining the full marginals requires the complete logical Pauli basis.
\begin{lemma}\label{error_twice}
    For every logical qubit $q$,
    \begin{equation}\label{erro_tw_eq}
        p\left(\mathcal{T}_q^X\right)+p\left(\mathcal{T}_q^Y\right)+p\left(\mathcal{T}_q^Z\right)=2p\left(E_q\right),
    \end{equation}
    where $p(\mathcal{T}_q^O)$ are the trap failure probabilities, i.e. the probabilities that the deterministic output of each trap circuit is flipped, and $p(E_q)\equiv\Pr(P_q\neq\mathbb{I})$ are the single-qubit marginals.
\end{lemma}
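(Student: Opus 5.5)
The plan is to reduce the statement to a pointwise identity on single-qubit Paulis and then average over fault configurations and correction bitstrings. First I fix $\mathbf{c}$ and a fault configuration $F$. By Lemma~\ref{same_logical_error}, the residual logical error $P(F)=\bigotimes_{q'}P_{q'}$ is the same whichever trap $\mathcal{T}^O_{q,\mathbf{c}}$ is run and whatever its input. The trap input is a $+1$ eigenstate of $U_{C,\mathbf{c}}^\dagger O_qU_{C,\mathbf{c}}$, so the noiseless state just before the final measurement is a $+1$ eigenstate $\ket{\psi}$ of $O_q$. The noisy state is $P(F)\ket{\psi}$, up to the propagated error, which Lemma~\ref{same_logical_error} has already absorbed into $P(F)$. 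Since $O_qP(F)=\pm P(F)O_q$, with the sign fixed by whether $P_q$ anticommutes with $O$, the measurement returns $-1$ deterministically if $\{P_q,O\}=0$ and $+1$ otherwise. Tensor factors on qubits $q'\neq q$ commute with $O_q$ and so play no role. The failure indicator of trap $\mathcal{T}^O_{q,\mathbf{c}}$ on $F$ is therefore exactly $\mathbbm{1}[\{P_q(F),O\}=0]$.

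The second step is the combinatorial identity on single-qubit Paulis:
\begin{equation*}
\sum_{O\in\{X,Y,Z\}}\mathbbm{1}[\{P_q,O\}=0]=2\,\mathbbm{1}[P_q\neq\mathbb{I}].
\end{equation*}
I check this by cases. If $P_q=\mathbb{I}$, it commutes with all three Paulis and both sides are $0$. If $P_q$ is $X$, $Y$ or $Z$, it commutes with itself and anticommutes with the other two, so both sides are $2$.

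The third step takes expectations. I average the identity over $F$ drawn from the fault distribution and over $\mathbf{c}$ drawn uniformly, and use linearity of expectation. By Lemma~\ref{same_fault_stats}, every trap, and the target, samples faults and correction bitstrings with the same law and is decoded by the same map. The averaged left-hand side is then $\sum_O p(\mathcal{T}^O_q)$, and the averaged right-hand side is $2\Pr(P_q\neq\mathbb{I})=2p(E_q)$, which is Eq.~\eqref{erro_tw_eq}.

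The main obstacle is making sure the three $p(\mathcal{T}^O_q)$, which come from three different circuits with different inputs, are expectations over one common probability space. This is needed before the pointwise identity can be summed. Lemma~\ref{same_fault_stats} supplies this except at the resource-preparation slots. Within the trap family itself, those slots carry identical stabilizer preparations, and A3 makes the data-input preparations equivalent, so the identity holds exactly with $p(E_q)$ interpreted under the trap fault law. The comparison with the target's magic-state preparation enters only through A2 as an inequality, and I would defer it to the subsequent bound rather than this lemma.
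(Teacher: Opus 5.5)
Your proposal is correct and follows the paper's proof. It uses the same pointwise identity $\sum_{O}\mathbf{1}(\mathcal{T}^O_q\text{ flips})=2\cdot\mathbf{1}(P_q\neq\mathbb{I})$, checks it by the same case analysis, and averages it over the common fault and bitstring distribution supplied by Lemmas~\ref{same_logical_error} and \ref{same_fault_stats}. Your remark that the equality holds exactly under the trap fault law, with the magic-state slots entering only later through A2, is more careful than the paper's blanket ``common distribution''. One small correction: the ancilla inputs of the three traps need not be identical, because the whole joint input is chosen in the $+1$ eigenspace of $U_{C,\mathbf{c}}^\dagger O_qU_{C,\mathbf{c}}$, but A3 covers those preparations too.
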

\begin{proof}
    When executing the trap $\mathcal{T}_q^O$, the deterministic output of the observable $O_q$ flips when the $q$th component of the resulting logical error anticommutes with $O_q$, and errors on the remaining blocks commute with $O_q$ and leave it unaffected. If $P_q=\mathbb{I}$ no trap of qubit $q$ flips, whereas if $P_q\in\{X,Y,Z\}$ it anticommutes with the two remaining Paulis. For every fault configuration we therefore have the identity
    \begin{equation}
        \sum_{O \in \{X,Y,Z\}} \mathbf{1}\left(\mathcal{T}^O_q \text{ flips}\right) = 2\cdot\mathbf{1}(P_q \neq \mathbb{I}),
    \end{equation}
    where $\mathbf{1}\left(\cdot\right)$ is the indicator function. By Lemmas \ref{same_logical_error} and \ref{same_fault_stats} the three traps of qubit $q$ sample fault configurations with the same statistics and map them to the same logical errors, so both sides of the above identity may be averaged over a common distribution. The left-hand side yields $\sum_Op(\mathcal{T}_q^O)$, and the right-hand side $2p(E_q)$.
\end{proof}
Equation \eqref{erro_tw_eq} is the bridge between what the experiment
records and what the bound requires. The marginals $p(E_q)$ are properties
of the target circuit and are not directly accessible, since they refer to
logical errors on a state whose output is neither deterministic nor
efficiently predictable. The trap failure probabilities $p(\mathcal{T}_q^O)$, in
contrast, are estimated by counting flipped outcomes over repeated
executions of stabilizer circuits on the device itself. Lemma
\ref{error_twice} converts the latter into the former.

Combining the three lemmas yields the central result of this work: an upper
bound on the logical error rate of a universal fault-tolerant circuit,
expressed entirely in terms of quantities measured on stabilizer circuits.
\begin{theorem}\label{main_theorem}
    Under assumptions A1--A4 and the trap construction from Sec.~\ref{traps_subsection}, the logical error rate of the target circuit $p_L\equiv\Pr(E_\mathrm{out})$ obeys
    \begin{equation}\label{bound}
        p_L\leq\Pr(P\neq\mathbb{I})\leq\sum_{q=1}^Kp(E_q)=\dfrac{1}{2}\sum_{i=1}^{3K}p(\mathcal{T}_i)\equiv\bar p_L,
    \end{equation}
    where $\{\mathcal{T}_i\}_{i=1}^{3K}$ indexes the complete set of trap circuits $\{\mathcal{T}_q^X,\mathcal{T}_q^Y,\mathcal{T}_q^Z\}_{q=1}^K$.
\end{theorem}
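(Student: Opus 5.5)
The proof will establish the chain in Eq.~\eqref{bound} link by link, working throughout with the fault-configuration distribution of the target circuit. The traps are then related to that distribution via Lemmas~\ref{same_logical_error}--\ref{error_twice}, with A2 handling the one place where target and traps differ.

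\emph{First inequality.} We already have $E_\mathrm{out}\subseteq\{F:P(F)\neq\mathbb{I}\}$, since a configuration with trivial residual logical error reproduces the ideal output. Monotonicity of probability then gives $p_L=\Pr(E_\mathrm{out})\le\Pr(P\neq\mathbb{I})$. This should be read for each fixed $\mathbf{c}$ and then averaged over $\mathbf{c}$. In the target, $\mathbf{c}$ is the recorded injection outcome, which is uniform and independent of the faults (Appendix~\ref{lemma2}, as used in Lemma~\ref{same_fault_stats}). So averaging over the target's own $\mathbf{c}$ matches averaging over the uniform $\mathbf{c}$ sampled in the traps.

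\emph{Second inequality.} Because $P(F)=\bigotimes_q P_q$ by Eq.~\eqref{logical_pauli_error}, we have $\{P\neq\mathbb{I}\}=\bigcup_{q=1}^K E_q$. The union bound then gives $\Pr(P\neq\mathbb{I})\le\sum_q p(E_q)$.

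\emph{Equality with the trap sum.} I will invoke Lemma~\ref{error_twice} qubit by qubit, which gives $p(E_q)=\tfrac12\sum_{O}p(\mathcal{T}_q^O)$, and then sum over $q$ after reindexing the $3K$ traps as $\mathcal{T}_i$.

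The main obstacle is one subtlety: the marginals $p(E_q)$ in the second inequality are defined under the \emph{target's} fault statistics. Lemma~\ref{error_twice}, by contrast, averages over the \emph{traps'} statistics, and Lemma~\ref{same_fault_stats} guarantees that these agree everywhere except at the resource-preparation slots. My plan is to split each fault configuration as $F=(F_\mathrm{res},F_\mathrm{rest})$. The $F_\mathrm{rest}$ part has identical law in target and traps by Lemma~\ref{same_fault_stats}, and independence across locations under A1 lets the joint law factor. At the resource slots, A2 says the magic-state preparation channels are stochastic Pauli with error rate no larger than the stabilizer ones. I would make this precise by coupling: I view the target's injected-state error as a thinning of the trap's, meaning each trap resource error is retained with some probability, which is possible when both are Pauli channels with the target's rate dominated. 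I then need $P_q$ to be monotone under removing faults. That is false in general for Pauli errors, since faults can cancel. So I would instead state the coupling at the level of the logical error on the resource block, treating each resource error as an extra logical Pauli injected at that slot. By Lemma~\ref{same_logical_error}, $P$ depends on $F$ only through propagation, so $p(E_q)$ computed with the target's statistics is at most the trap-side quantity.

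If that domination argument proves delicate, I will fall back on the reading the paper adopts in its statement, namely equality of statistics up to the resource slots with A2 resolving them. In that case the final equality holds exactly when the resource channels coincide, and as the stated inequality otherwise.
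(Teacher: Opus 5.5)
Your three links are exactly the paper's proof: the inclusion $E_\mathrm{out}\subseteq\{F:P(F)\neq\mathbb{I}\}$, Boole's inequality applied to $\{P\neq\mathbb{I}\}=\bigcup_q E_q$, and Lemma~\ref{error_twice} summed over $q$. The paper handles the target-versus-trap issue in one sentence, asserting via Lemmas~\ref{same_logical_error} and \ref{same_fault_stats} that every probability in the chain may be evaluated on either circuit, with A2 invoked informally for the resource slots. Your fallback reproduces exactly that.

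The step you add beyond the paper fails, however, and so does the last clause of your fallback (``as the stated inequality otherwise''). Recasting each resource error as a logical Pauli injected at its slot does not remove the cancellations you identified. $P_q$ is the product of the propagated resource error and an independent error $A$ coming from all other faults, so deleting a resource error can turn $P_q$ from $\mathbb{I}$ into a non-trivial Pauli. Lemma~\ref{same_logical_error} only says that $F\mapsto P(F)$ does not depend on the input state, so it yields no monotonicity. Moreover, A2 compares only total rates, so a thinning coupling need not exist at all.

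Here is a concrete counterexample. Take $q$ to be an injection ancilla, and let the other faults leave $X$ on it with probability $a$. Let its resource error be $X$ with probability $b$ in the trap but $Z$ with probability $b$ in the target. Both errors pass through the CNOT unchanged on the ancilla, and A2 holds with equality. Then $p(E_q)=a+b-2ab$ under trap statistics but $a+b-ab$ under target statistics, so the target-side marginal exceeds the trap-side one. Even pointwise domination of the Pauli distributions would give monotonicity only under an extra regime condition, such as $\Pr(A=\mathbb{I})\geq\Pr(A=E)$ for all $E$.

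What you can prove rigorously is an additive version. Swap the target's resource channels for the traps' once, at the top of the chain. This changes $\Pr(E_\mathrm{out})$ by at most $\sum_j d_j$, where $d_j$ is the total-variation distance between the two Pauli error distributions at resource slot $j$. When one distribution dominates the other pointwise, $d_j$ is just the per-slot excess rate, matching the paper's remark after A2. Then run the whole chain under trap statistics to obtain $p_L\leq\bar p_L+\sum_j d_j$. This recovers Eq.~\eqref{bound} exactly when the resource channels coincide, which is the reading the paper's one-line appeal to A2 effectively adopts.
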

\begin{proof}
    By Lemmas \ref{same_logical_error} and \ref{same_fault_stats}, target and traps share both the map from fault configurations to logical errors and the statistics of those configurations, so every probability appearing above may be evaluated on either.

    The first inequality follows from the inclusion $E_\text{out}\subseteq\{F:P(F)\neq\mathbb{I}\}$, since any fault that corrupts the output of the target circuit necessarily leaves a non-trivial logical Pauli error.

    For the second inequality, a non-trivial logical Pauli acts non-trivially on at least one logical qubit, so that $\{F:P(F)\neq\mathbb{I}\}=\bigcup_{q=1}^K E_q$, and Boole's inequality gives $\Pr(P\neq\mathbb{I})\leq\sum_qp(E_q)$.

    The final equality is Lemma \ref{error_twice} summed over the $K$ logical qubits.
\end{proof}

The theorem turns the certification problem into a measurement of $3K$ deterministic binary outcomes. The price of this reduction is that the relation is an inequality rather than an identity, with the two inequalities having distinct origins. Whereas the first inequality reflects the errors the target circuit is immune to but the traps still register, the second one overcounts single faults corrupting several logical qubits at once as a consequence of the union bound. The quantification of these two sources of looseness is precisely what the following subsection addresses.

\subsection{Tightness of the bound}\label{bound_tightness}

Theorem \ref{main_theorem} bounds the logical error rate without any prior knowledge of the structure of the underlying Pauli channel. That is, neither the individual fault rates, nor their distribution among the circuit locations, nor the relative weights of the different Pauli error types enter the derivation. We now make the two inequalities of Theorem \ref{main_theorem} quantitative, showing that each contributes a multiplicative factor set by the structure of the circuit.

For a fault configuration $F$ producing a logical error as in Eq.~\eqref{logical_pauli_error}, we define its weight as
\begin{equation}
    W(F)=\left|\{q:P_q\neq\mathbb{I}\}\right|=\sum_{q=1}^K\mathbf{1}(F\in E_q),
\end{equation}
that is, the number of logical qubits on which the error acts non-trivially. Using the linearity of the expected value gives
\begin{equation}\label{mean_weight}
    \langle W\rangle=\sum_{q=1}^K\langle\mathbf{1}(F\in E_q)\rangle=\sum_{q=1}^Kp(E_q)=\bar p_L,
\end{equation}
which shows that the upper bound is precisely the mean number of logical qubits corrupted per execution. Being a mean count rather than a probability, $\bar p_L$ can in principle exceed one, and is thus informative only in the fault-tolerant regime, where logical errors are rare and $\bar p_L\ll1$. Conditioning on whether an error occurred at all, with $\Pr(W\geq1)=\Pr(P\neq\mathbb{I})$ and $W=0$ otherwise,
\begin{equation}\label{weight_decomp}
    \langle W\rangle=\langle W|W\geq 1\rangle\cdot\Pr(P\neq\mathbb{I})=\bar W\cdot\Pr(P\neq\mathbb{I}),
\end{equation}
with $\bar W=\langle W|W\geq 1\rangle\geq1$ the mean weight conditioned on a logical error having occurred. Putting together Eqs.~\eqref{mean_weight} and \eqref{weight_decomp}, the bound decomposes as
\begin{equation}
    \bar p_L=\bar W\cdot\dfrac{\Pr(P\neq\mathbb{I})}{p_L}\cdot p_L,
\end{equation}
which isolates the two sources of looseness in Theorem \ref{main_theorem} as exact multiplicative factors.

The factor $\Pr(P\neq\mathbb{I})/p_L$ quantifies the first inequality and reflects the intrinsic immunities of the target circuit. These include logical errors that the traps register but which leave the computation unaffected, such as logical $Z$ errors carried by an injection ancilla at its measurement slice, which commute with its readout, or, for circuits with classical output, data errors commuting with the measured logical observables. Its size is therefore set by the readout structure of the target circuit. It is smallest when the output is a logical quantum state, for which every residual error corrupts the stored quantum information, and larger when the output consists of classical measurement outcomes. In either case, this factor is a circuit-dependent constant, and it is not accessible without knowledge of $p_L$ itself, which is precisely the quantity the protocol is designed to avoid having to measure.

The factor $\bar W$ quantifies the second inequality and reflects how errors propagate. If logical errors were confined to individual logical qubits, then $\bar W=1$, and the union bound would be saturated. Entangling logical operations, however, can spread errors across blocks, so that multi-qubit logical errors are generated at the same order in the physical error rate as single-qubit ones. Thus, $\bar W>1$ in any circuit with a non-trivial entangling structure, growing with depth and connectivity. The overcounting is nonetheless bounded, since $W(F)\leq K$ for every fault configuration and hence $\bar W\leq K$, so that the union bound can inflate the estimate by at most the logical footprint of the circuit. In the fault-tolerant regime, $\bar W$ approaches a circuit-dependent constant, and is expected to grow slowly only when the physical error rate approaches the code fault-tolerance threshold.

Overall, $\bar p_L$ remains proportional to the true logical error rate up to these two bounded factors, tracking the target's true logical error rate rather than diverging from it.

\section{Certification protocols}\label{protocols}

Having established the upper bound of Theorem \ref{main_theorem}, we now present
the two operational protocols that evaluate it in practice, which differ in the question they answer. Direct estimation (Sec.~\ref{direct_estimation}) reconstructs the bound itself by estimating every trap failure rate, which is useful when the value of the bound is wanted, but its cost grows as the device improves. Threshold certification (Sec.~\ref{cert_protococol}) instead decides whether the bound lies below a predefined tolerance, at a cost independent of the error rates of the device, providing the operational form of the magic-free certification protocol.

The whole formalism is flexible regarding the circuit it certifies. Applying the construction to a single injection gadget benchmarks that gadget, while applying it to a full logical algorithm certifies the entire computation, with no change in the protocol. The experimental cost of the protocols is set by two resources: the number of distinct trap configurations to be compiled, and the total number of shots. Complete coverage of the logical Pauli basis requires $3K$ trap configurations, with the correction bitstring sampled per shot.

\subsection{Direct estimation of the bound}\label{direct_estimation}

Since each trap execution yields a binary
outcome, every shot constitutes an independent Bernoulli trial, with the independence across shots following from A1, and the estimation admits a transparent statistical treatment. After $M$ shots, the estimated failure rate carries a standard deviation $\sigma = \sqrt{p(\mathcal{T}_i)[1-p(\mathcal{T}_i)]/M}$, which in the fault-tolerant regime $p(\mathcal{T}_i) \ll 1$ reduces to $\sigma \approx \sqrt{p(\mathcal{T}_i)/M}$. Estimating $p(\mathcal{T}_i)$ to a
relative precision $\epsilon = \sigma / p(\mathcal{T}_i)$ therefore requires
\begin{equation}\label{direct_shots}
    M \approx \frac{1}{\epsilon^2 p(\mathcal{T}_i)}
\end{equation}
shots per trap circuit. The upper bound $\bar p_L$ is then constructed from the $3K$ estimated rates through Eq.~\eqref{bound}, with a relative
uncertainty no larger than $\epsilon$.

Equation \eqref{direct_shots} exposes the sample-complexity bottleneck inherent to benchmarking highly reliable hardware. For a code of distance $d$ operating at physical error rate $p$, the trap failure rates are suppressed as $p(\mathcal{T}_i) =C_i p^{\lfloor (d+1)/2 \rfloor}$ in the fault-tolerant regime, with circuit-dependent prefactors $C_i$, so that the total number of shots required to resolve the bound at fixed relative precision
scales as
\begin{equation}\label{direct_scaling}
    M_{\text{tot}} \sim \mathcal{O}\!\left( \frac{K}{p^{\lfloor (d+1)/2\rfloor}}\right).
\end{equation}
Although linear in the logical footprint $K$, this cost grows exponentially
with the code distance as hardware improves. Consequently, the better the quantum device is, the more
shots are needed to resolve how good it is. Direct estimation is therefore best suited to near threshold hardware, where the
trap rates remain large enough to be resolved in a resource-efficient manner. Certifying advanced fault-tolerant hardware, whose logical error rates are by design
vanishingly small, calls for a different approach. Rather than resolving the exact value of a quantity that is exponentially close to zero, it suffices to certify that it lies below a tolerance predefined by the user, at a cost that, as shown next, depends on neither the code nor the hardware error rate. This is the subject of the following subsection.

\subsection{Threshold certification protocol}\label{cert_protococol}

As shown above, directly reconstructing the exact logical error rate $p(\mathcal{T}_i)$ of a highly reliable trap circuit requires a prohibitive number of experimental shots, scaling as $\mathcal{O}(1/p(\mathcal{T}_i))$. To circumvent this sample complexity bottleneck for practical hardware evaluations, we propose a certification protocol based on a statistical hypothesis test, which constitutes the operational form of magic-free certification. Rather than reconstructing the exact upper bound, we aim to certify that the logical error rate is strictly below a predefined error tolerance $p^*_L$ at a confidence level $1-\delta$, where $\delta$ is the significance level, i.e., the false positive rate at $\bar p_L=p_L^*$.

The key observation is that the upper bound of Theorem \ref{main_theorem} depends
on the individual trap circuits only through their sum. It is therefore not
necessary to resolve each failure rate separately, but it suffices to constrain the
average failure rate of the complete trap set,
\begin{equation}\label{qbar}
    \bar f = \frac{1}{3K} \sum_{i=1}^{3K} p(\mathcal{T}_i) = \frac{2 \bar p_L}{3K}.
\end{equation}
Since $\bar f$ is proportional to $\bar p_L$, certifying $\bar p_L < p_L^*$ is equivalent to certifying that the average trap failure rate lies below the corresponding local security threshold
\begin{equation}\label{qstar}
    f^* = \frac{2 p_L^*}{3K}.
\end{equation}
This reduces the certification of the whole logical circuit to a single
hypothesis test on a single parameter.

Assuming no prior knowledge about the hardware, all logical error channels are taken to be equally damaging, and the trap set is sampled uniformly. The total
budget of $M_C$ certification shots is distributed evenly among the $3K$ trap configurations interleaved at random with the standard executions of the target
circuit. Let $m$ denote the total number of failures registered across the whole trap set. Its expected value,
\begin{equation}\label{m_expected}
    \lambda = M_C\, \bar f = \frac{2 M_C}{3K}\, \bar p_L,
\end{equation}
is directly proportional to the bound we wish to certify, irrespective of how the logical errors are distributed among the individual trap circuits.

We define the null hypothesis as $\bar f \ge f^*$, i.e., the device fails to meet the target tolerance. To account for the probabilistic nature of quantum noise, we allow up to $w$ failures. If $m>w$ failures are detected, the device is declared uncertified, whereas if $m\leq w$, the null hypothesis is rejected and the target circuit is certified. Because trap executions are independent Bernoulli trials, the aggregated failure count follows a Poisson binomial distribution, whose rates $p(\mathcal{T}_i)$ need not be equal. By Le Cam's theorem \cite{LeCam1960}, its total variation distance from a Poisson distribution of the same mean $\lambda=M_C\bar f$ is at most $\lambda\cdot\max_ip(\mathcal{T}_i)$, which is negligible in the fault-tolerant regime. The certification probability therefore reads
\begin{equation}\label{P_C}
    P_C = \Pr(m\leq w)= F_P\left(w;\, M_C \bar f\right),
\end{equation}
where $F_P(w; \lambda) = \sum_{m=0}^{w} \lambda^m e^{-\lambda}/m!$ is the
Poisson cumulative distribution function (CDF). Since $F_P$ decreases monotonically with its mean, $P_C$ decreases monotonically with $\bar p_L$. Thus, the worse the device, the less likely it is to be certified.

The protocol is then fully specified by two conditions. The first one enforces its security: at the tolerance $\bar p_L = p_L^*$, the probability of
erroneously certifying a device that does not meet the tolerance must not
exceed the significance level $\delta$,
\begin{equation}\label{security}
    F_P\!\left(w;\frac{2 M_C p_L^*}{3K}\right) \le \delta.
\end{equation}
The second condition enforces its yield: any device operating safely below the predefined tolerance $p_L^*$, at an operational target $\bar p_L\leq rp_L^*$, with operational ratio $0<r<1$, must be certified with probability at least $1-\gamma$,
\begin{equation}\label{yield}
    F_P\left(w;\frac{2 M_C r p_L^*}{3K}\right) \geq 1 - \gamma,
\end{equation}
where $\gamma$ is the false negative rate at the operational target $rp_L^*$. Between the yield boundary $r\,p_L^*$ and the security threshold $p_L^*$ lies
the \emph{indifference zone} of the protocol, in which no statistical guarantee is enforced and the outcome may go either way depending on statistical fluctuations. Its width is fixed by the operational ratio alone (it is the interval $(r\,p_L^*,\, p_L^*)$ and it can be narrowed at will by increasing $r$, at a shot cost quantified below. Moreover, since $m$ is sensitive only to the average trap
failure rate, the guarantees are distribution-free. Hardware in which faults are evenly spread across all traps and hardware in which all errors concentrate on a single trap are certified with identical statistical guarantees.

The number of allowed failures $w$ and the total number of certification shots $M_C$ are obtained by finding the smallest pair $(w,M_C)$ that simultaneously satisfies Eqs.~\eqref{security} and \eqref{yield}. As detailed in Appendix \ref{test_calibration}, this calibration is scale-invariant. The failure tolerance $w$ depends only on the statistical parameters $\delta$ and $\gamma$ and the operational ratio $r$. It is obtained from the expected failure counts $\lambda^* = \frac{1}{2} F_{\chi^2}^{-1}(1-\delta; 2(w+1))$ and $\lambda^t = \frac{1}{2} F_{\chi^2}^{-1}(\gamma;2(w+1))$ \cite{Johnson_Kotz_Kemp_2005}, with $F_{\chi^2}^{-1}(\cdot;\cdot)$ the inverse chi-squared CDF, through the condition $\lambda^t / \lambda^* \geq r$. Only once $w$ is fixed does the physical scale enter, in order to compute the required number of shots through $M_C = \lceil \lambda^* / f^* \rceil$.

The interplay of the security and yield conditions is illustrated in Fig.~\ref{main_figure}(d) for $r=0.5$. The curve crosses the significance level $\delta$ at the security threshold $\bar p_L=p_L^*$, enforcing Eq.~\eqref{security}, and meets the yield level $1-\gamma$ at the operational target $\bar p_L=rp_L^*$, enforcing Eq.~\eqref{yield}. We stress that, once plotted against $\bar p_L/p_L^*$, the curve is universal, since its shape depends only on the statistical parameters $\delta$, $\gamma$ and $r$, while the circuit being certified enters solely through the shot budget $M_C$ and through the value of $\bar p_L$ that the measurements assign to the device.
 
\begin{figure*}[t!]
    \centering
    \includegraphics[width=\textwidth]{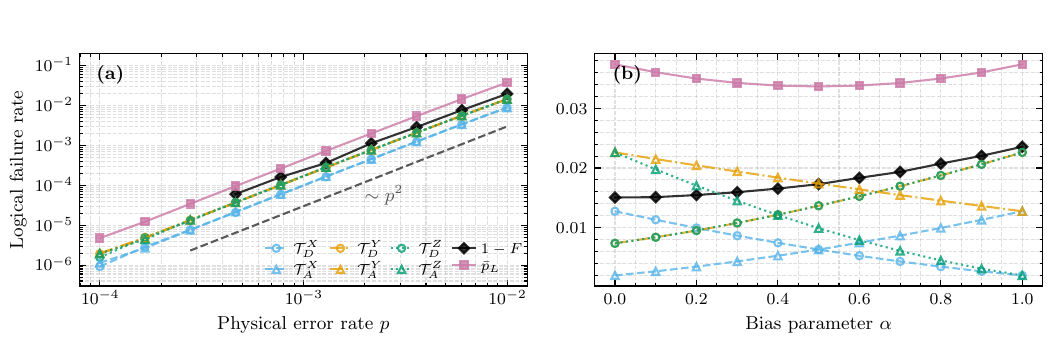}
    \caption{Upper bound for the logical circuit $HT\ket{+}$ encoded in the Steane code. The subscripts $D$ and $A$ denote traps associated with the logical data qubit ($\mathcal{T}_D^O$) and the logical ancilla qubit ($\mathcal{T}_A^O$), respectively, and the superscript $O\in\{X,Y,Z\}$ denotes the measured logical Pauli operator. Each data point is estimated from $M=10^8$ shots per circuit configuration. (a) Scaling of logical failure rates as a function of the physical depolarizing error rate $p$. The derived upper bound $\bar p_L$ is compared against the output-state infidelity reconstructed via full QST, which lower-bounds the
    total logical Pauli error probability. Both quantities exhibit the expected
    $\sim p^2$ fault-tolerant scaling for a distance-3 code, with the bound
    tracking the infidelity up to an approximately constant factor. (b) Trap sensitivities under biased Pauli noise, ranging from pure $X$ ($\alpha=0$) to pure $Z$ ($\alpha=1$), at a fixed total physical error rate of $p=0.01$. The data and ancilla traps respond to opposite ends of the sweep, reflecting the direction in which each fault type propagates through the gadget, while the bound remains above the reconstructed infidelity across the entire bias range.}
    \label{msi_numerics}
\end{figure*}

Finally, we evaluate the certification shot budget. Writing
$\lambda^* = M_C f^*$ for the expected number of failures at the security
threshold, the shot count required for certification is
$M_C = (3K / 2 p_L^*)\, \lambda^*$, which, expanding $\lambda^*$ in the
test's parameters as detailed in Appendix \ref{shot_scaling}, yields
\begin{equation}\label{MC_scaling}
    M_C \sim \mathcal{O}\left( \frac{K}{p_L^*}
    \frac{\left(\sqrt{-\ln\delta} + \sqrt{-\ln\gamma}\right)^2}{(1-r)^2}
    \right).
\end{equation}
The protocol incurs a linear penalty in the circuit size $K$, since the local security threshold $f^*$ is diluted among the $3K$ trap configurations, and a
quadratic penalty for sharpening the operational ratio towards $r \to 1$ (Fig.~\ref{main_figure}(e)). In exchange, the sample complexity is completely decoupled from the vanishing hardware error rates. By selecting a practical tolerance $p_L^*$ well above the bound $\bar p_L$ the device achieves, the certification cost is set by the question being asked rather than by the quality of the device answering it, rendering rigorous circuit certification viable precisely in the regime where direct estimation becomes prohibitive. All of this is achieved without executing a single non-Clifford operation, so that certification consumes no magic states.

\section{Numerical simulations}\label{numerics}

In this section, we illustrate the theoretical framework and operational protocols of Secs.~\ref{framework} and \ref{protocols} with numerical simulations under circuit-level noise. These simulations are illustrative rather than confirmatory, since the bound of Theorem~\ref{main_theorem} is already proven, and the statistical guarantees of the certification protocol, namely its security, yield and shot budget, follow analytically from Sec.~\ref{protocols} and depend only on $\delta$, $\gamma$, $r$, $K$ and
$p_L^*$, so that simulating larger circuits would only reproduce known results without adding information. The simulations show instead how the bound behaves in practice, how the individual traps respond to different noise, and
how the test is calibrated for a multi-qubit algorithm. To this end, we consider two fault-tolerant layouts with logical qubits encoded in the Steane
code \cite{Steane1996}: an MSI gadget, small enough for the error of the target to be reconstructed independently by state tomography, and the QFT. The noise channels applied in the simulations are the stochastic Pauli channels assumed
in A1. Trap circuits are then efficiently simulated using the stabilizer simulator Stim \cite{gidney2021stim}, whereas to compute the infidelity with respect to the target state in the MSI gadget we perform QST simulations using Tsim \cite{tsim2026}.

\subsection{Bounding a magic state injection gadget}\label{bounding_msi}

To characterize the performance of a non-Clifford gadget, we consider the logical circuit $HT\ket{+}$, where we apply a $T$ gate via MSI to the Pauli-$X$ eigenstate $\ket{+}$, followed by a Hadamard gate $H$ in order to map the phase information into population. To ensure a complete coverage of logical Pauli errors, we implement the three distinct trap configurations for both the logical data block (labeled $\mathcal{T}_D^O$) and the logical ancilla block (labeled $\mathcal{T}_A^O$). We then evaluate the upper bound under two distinct noise environments: depolarizing noise to show the fault-tolerant scaling of the upper bound (Fig.~\ref{msi_numerics}(a)), and biased noise to test the resilience of the upper bound under fluctuations of the noise model (Fig.~\ref{msi_numerics}(b)).

To implement QEC both in the trap and QST shots, we infer the parities of the relevant stabilizers from the measured observables. The syndromes are decoded with a lookup table (LUT), and the measurement outcomes are left as measured or flipped accordingly. All circuits are decoded with the same layer-wise LUT scheme, in which corrections are recorded in the Pauli frame and not propagated through the logical gates. Following A4, all reported failure rates are relative to this decoder. Following Sec.~\ref{traps_subsection}, each trap is executed in both of its correction-bitstring variants, with $M/2$ shots each and failures combined uniformly. For trap circuits, logical failure is declared if, after QEC, the logical observable outcome does not match the expected deterministic value. For QST simulations we compute the infidelity of the output state, $1-F=\frac{1}{2}\left(1-\mathbf{r}_{\mathrm{id}}\cdot\mathbf{r}_{\mathrm{meas}}\right)$,
where $\mathbf{r}_{\mathrm{id}}=(0,-1,1)/\sqrt{2}$ is the ideal Bloch vector, and $\mathbf{r}_{\mathrm{meas}}=(\langle X\rangle, \langle Y\rangle,\langle Z\rangle)$ is the reconstructed Bloch vector. For a stochastic Pauli channel acting on a pure target state, the infidelity
lower-bounds the probability that the data output carries a non-trivial logical Pauli error, so that quantities compared in Fig.~\ref{msi_numerics} are
ordered as $1-F\leq\Pr(P\neq\mathbb{I})\leq\bar p_L$. The vertical gap between $\bar p_L$ and $1-F$ therefore combines three distinct contributions: the difference between infidelity and Pauli error probability, the intrinsic immunities of the target circuit, and the union-bound overcounting, the latter two discussed in Sec.~\ref{bound_tightness}.

For the baseline fault-tolerant scaling analysis (Fig.~\ref{msi_numerics}(a)), we apply a circuit-level depolarizing noise model. Fault-tolerant state preparation is modeled by an ideal initialization followed by a layer of single-qubit depolarizing noise with probability $p_p$, independent of the prepared basis, consistent with A3. Two-qubit depolarizing noise is applied after every CNOT gate at a rate $p_g$, alongside measurement errors with probability $p_m$. Single-qubit gates are taken to be noiseless. We assume a uniform physical error rate $p=p_p=p_g=p_m$. To accurately construct the upper bound (UB) and minimize the impact of quantum projection noise in QST simulations, the number of shots for each circuit configuration is set to $M=10^8$. As illustrated in Fig.~\ref{msi_numerics}(a), both the derived upper bound $\bar p_L$ and the logical infidelity obtained via QST exhibit the $\mathcal{O}(p^2)$ fault-tolerant scaling expected for a distance-3 code, with the upper bound remaining strictly above the reconstructed infidelity across all physical error rates $p$. The QST curve terminates however at $p\approx5\cdot10^{-4}$. Below this regime, resolving the hardware-induced infidelity against quantum projection noise requires a number of tomographic shots that grows as the logical error rate vanishes, which is precisely the bottleneck that deterministic traps remove. By replacing these tomographic measurements with deterministic trap circuits, the certification upper bound successfully bypasses this statistical sampling bottleneck and tracks the vanishing error rate accounting only for hardware faults.

To demonstrate the robustness of the upper bound under different noise profiles, Fig.~\ref{msi_numerics}(b) evaluates the same injection gadget at a fixed total physical error rate $p=p_X+p_Z=0.01$ while varying the noise bias fraction $\alpha=p_Z/p$. Here, the standard depolarizing channels from the baseline simulations are replaced by a structured biased noise model. For state preparation, errors are applied via a single-qubit Pauli channel restricted to $X$ errors with probability $p_X=(1-\alpha)p$ and $Z$ errors with probability $p_Z=\alpha p$. Two-qubit noise is applied after each CNOT gate by distributing pure $X$-type errors equally among $IX, XI,$ and $XX$ errors, which occur with a total probability $p_X$, and pure $Z$ errors equally among the $IZ, ZI,$ and $ZZ$ channels, taking place at a total rate $p_Z$. All cross-terms and $Y$ errors are set to zero, and the single-qubit gates remain ideal. Finally, measurement errors are applied at a constant rate $p$, completely independent of the bias parameter $\alpha$.

Figure \ref{msi_numerics}(b) shows that the data and ancilla traps respond to opposite ends of the sweep, reflecting the direction in which each fault type propagates through the gadget. This behavior is best understood by examining the limiting cases of pure $X$ ($\alpha=0$) and pure $Z$ ($\alpha=1$) noise. Under pure $X$ noise, physical $X$ errors on the data block are transformed into logical $Z$ errors by the Hadamard gate just before the measurement. Consequently, only the data qubit's $X$ and $Y$ traps, which are sensitive to logical $Z$ errors, detect these faults, whereas the data $Z$-trap registers only bias-independent measurement errors. Furthermore, the transversal CNOT gates propagate physical $X$ errors forward from the control (data) to the target (ancilla), yielding a logical $X$ error at the ancilla's measurement slice. This fault accumulation causes the ancilla's $Y$ and $Z$ traps to have higher failure rates than their data counterparts at $\alpha=0$. Conversely, under pure $Z$ noise ($\alpha=1$), physical $Z$ errors propagate backward from the target to the control. The data block's Hadamard gate then transforms these accumulated $Z$ faults into logical $X$ errors. As a result, the data qubit's $Y$ and $Z$ traps become susceptible to the noise, surpassing the failure rates of the corresponding ancilla traps. 

While tracking these specific propagation channels provides physical insight, the primary purpose of the upper bound is to bypass the need for such exhaustive error modeling. This is particularly evident when analyzing the reconstructed infidelity, which exhibits a pronounced asymmetry across the bias sweep. This asymmetric behavior is driven by the feedforward mechanics of the injection gadget: under pure $X$ noise ($\alpha=0$), the ancilla block accumulates logical $X$ errors that flip the logical measurement outcome, triggering an incorrect feedforward operation that degrades the final state fidelity. Conversely, under pure $Z$ noise ($\alpha=1$), the ancilla exclusively suffers logical $Z$ errors, which commute with the $Z$-basis measurement and leave the feedforward operation completely unaffected. Consequently, the logical failure rate at $\alpha=1$ is driven almost entirely by the data qubit's logical $X$ errors, explaining why the reconstructed infidelity directly converges with the data trap failure rates at the rightmost edge of the plot. The upper bound nonetheless remains above the QST curve across the entire bias range, as Theorem~\ref{main_theorem} guarantees for any stochastic Pauli noise, without requiring any characterization of the noise structure.

\subsection{Certifying a fault-tolerant QFT}

\begin{figure*}[t!]
    \centering
    \includegraphics[width=\textwidth]{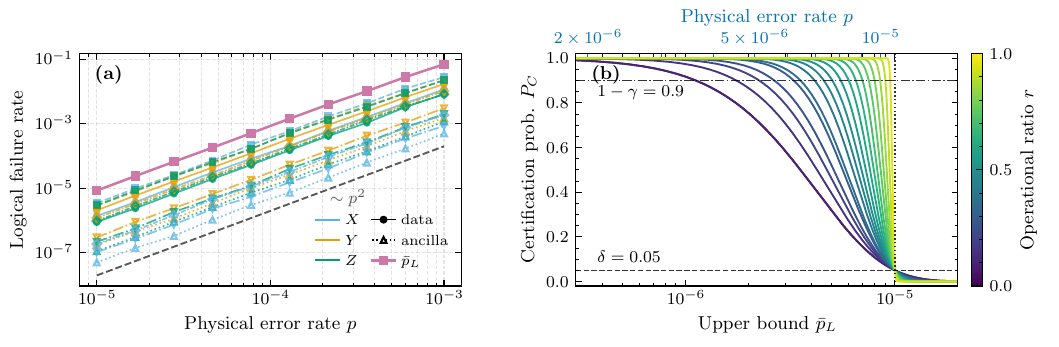}
    \caption{Certification of a fault-tolerant QFT on $k_D=2$ logical data qubits and $k_T=3$ injections encoded in the Steane code. (a) Failure rates of the $15$ trap configurations as a function of the physical depolarizing error rate $p$, together with the upper bound
$\bar p_L$ assembled from them through Theorem \ref{main_theorem}. Shots are assigned to each trap based on a preliminary estimate of its quadratic prefactor, so that each rate is resolved to a fixed relative precision $\epsilon=0.05$. Colors
denote the measured logical Pauli operator and marker fill
distinguishes data blocks (filled) from injection ancillas (empty), whereas different linestyles signal different qubits. The dashed grey
line is a $\sim p^2$ guide. Each trap is executed in its $2^{k_T}$
correction-bitstring variants with failures combined uniformly. (b)
Certification probability $P_C$ as a function of the upper bound $\bar p_L$,
for operational ratios $r$ ranging from $0.05$ to $0.95$ (color bar),
calibrated at $p_L^*=10^{-5}$ (dotted line), $\delta=0.05$ (dashed line) and
$1-\gamma=0.9$ (dash-dotted line). Each curve crosses the significance level at $\bar p_L=p_L^*$ and the yield level holds at $\bar p_L=r\,p_L^*$, delimiting the
indifference zone. The upper axis gives the physical error rate of the
simulated device inferred from the quadratic scaling fitted in (a).}
    \label{qft_fig}
\end{figure*}

To demonstrate the certification protocol at the scale of a full logical
algorithm, we consider a QFT acting on $k_D=2$
logical data qubits. Its fault-tolerant implementation requires $k_T=3$ magic
state injections (two $T$ gates and one $T^\dagger$ gate), so that the total
logical footprint is $K=5$. Complete characterization of the
upper bound requires $15$ distinct trap configurations, each executed in
its $2^{k_T}=8$ correction-bitstring variants with failures combined uniformly.
All logical qubits are encoded in the Steane code, and we apply the same circuit-level depolarizing noise model as in the previous subsection, with a
uniform physical error rate $p$. To estimate the trap failure rates, and unlike the fixed number of shots of Sec.~\ref{bounding_msi}, shots are now allocated to each trap from a preliminary estimate, so that every rate is resolved to a fixed relative precision $\epsilon=0.05$.

Because the QFT circuit contains a significantly larger gate count than a
single injection gadget, faults have many more opportunities to propagate, and
a single error correction step at the end of the circuit is insufficient. We
therefore interleave the logical operations with layers of Steane error
correction \cite{Steane_1997,Postler2024}, which extract the syndromes fault-tolerantly, on every logical block within its live window. Data blocks undergo syndrome extraction at every layer until their final
measurement, whereas each injection ancilla undergoes it only up to its
measurement slice, mirroring the target circuit. During each error correction
cycle, the blocks interact transversally with logical ancillas prepared in the
$\ket{0}$ and $\ket{+}$ states, which are subsequently measured in the $X$ and $Z$ bases to extract
the $Z$- and $X$-error syndromes respectively. The preparation of these syndrome ancillas is
subject to the same noise model as the data blocks. We stress that the error
correction schedule is part of the circuit location set shared by target and
traps. Every block is corrected in every trap configuration, irrespective of
which qubit that trap targets, since faults occurring during syndrome
extraction on the remaining blocks propagate onto the targeted block in
the target circuit and must therefore be sampled by its traps. All circuits
are decoded with the same layer-wise LUT scheme, in which corrections are
recorded in the Pauli frame and are not propagated through the logical gates, as in the injection gadget simulation and thus following A4.

Figure \ref{qft_fig}(a) shows the failure rates of the $15$ trap
configurations, together with the upper bound $\bar p_L$ assembled from them
through Theorem \ref{main_theorem}. All trap rates exhibit the $\sim p^2$ scaling expected for a distance-3 code across the range of physical error rate
considered, confirming that the full circuit, including its error correction schedule and mid-circuit ancilla measurements, remains fault-tolerant at algorithmic scale. The spread among the individual rates reflects the position of each block in the circuit and the direction in which
faults propagate towards it, with the injection ancillas, whose live windows are shorter, generally failing at lower rates than the data blocks.

No independent reconstruction of the output state is performed at this scale. Beyond the shot cost, which grows as the logical error rate vanishes, such a reconstruction would only reproduce on a larger circuit the ordering already established in Sec.~\ref{bounding_msi}, without probing any additional feature of the protocol. The trap failure rates and the bound assembled from them are therefore the object of study, and the remainder of this subsection illustrates how the certification protocol operates on them.

Once the trap set is fixed, the protocol is completely specified by the four
statistical parameters $p_L^*$, $\delta$, $\gamma$ and $r$ and the trap count. Targeting $p_L^*=10^{-5}$ with $\delta=0.05$ and
$\gamma=0.1$, the calibration procedure of Appendix \ref{test_calibration} yields $(w,M_C)=(4,6.9\cdot10^{6})$ at $r=0.25$, $(17,1.9\cdot10^{7})$ at $r=0.5$, and $(102,9.0\cdot10^{7})$ at $r=0.75$, with the shots distributed uniformly among the trap configurations and their correction-bitstring variants. These numbers illustrate the resolution penalty of Eq.~\eqref{MC_scaling}, since sharpening the operational ratio towards $r\to1$ inflates both the failure tolerance and the shot budget as $(1-r)^{-2}$, while the dependence on the security threshold $1/p_L^*$ remains decoupled from the vanishing hardware error rates.

Figure \ref{qft_fig}(b) shows the resulting certification probability $P_C$ as a function of the upper bound $\bar p_L$, for operational ratios ranging from $r=0.05$ to $r=0.95$, complementing the single curve of Fig.~\ref{main_figure}(d) by showing how the transition sharpens as $r\to1$. Every curve crosses the significance level $\delta$ at $\bar p_L=p_L^*$ and meets the yield level $1-\gamma$ at $\bar p_L=rp_L^*$, so that the indifference zone between them narrows as $r$ grows. The upper axis translates the bound into the physical error rate of the simulated device through the quadratic scaling
fitted in Fig.~\ref{qft_fig}(a). We note that this conversion to physical error rates is a property of the distance-3 Steane code under this particular (unoptimized) LUT decoder and at the depth required by the QFT rather than of the certification protocol, whose cost $\mathcal{O}(K/p_L^*)$ is unchanged by moving to a code of higher distance.

\section{Discussion and outlook}\label{discussion}

We have introduced magic-free certification, a protocol that decides whether a fault-tolerant device executes a universal logical circuit below a tolerance fixed in advance, using only stabilizer circuits run on the same device as the target. Its guarantees follow analytically from the bound of Theorem~\ref{main_theorem} and from the calibration of the test, so that the simulations of Sec.~\ref{numerics} can illustrate them but not extend them. What remains open is how the protocol behaves on real hardware, and an experimental implementation is therefore the natural next step. The protocol requires no operation beyond those already used by the target circuit, except for the substitution of its magic states, together with random interleaving of trap and target executions and, when corrections are applied physically, uniform sampling of the correction bitstrings.

The logical circuits needed for such a demonstration are already being run.
Logical non-Clifford gates have been implemented at break-even with trapped ions \cite{Dasu2025}, logical magic states have been prepared and distilled with superconducting circuits and neutral atoms \cite{Gupta2024,Ye2023,SalesRodriguez2025}, and neutral-atom processors have run logical algorithms on tens of logical qubits \cite{Bluvstein2023}. A three-qubit QFT encoded in the Steane code, of which the circuit of
Sec.~\ref{numerics} is a smaller instance, has moreover been executed on a trapped-ion processor \cite{mayer2024benchmarkinglogicalthreequbitquantum}. There, performance was assessed through component benchmarks and fidelity lower bounds, and the
component errors were found to account for only part of the error of the full circuit. Magic-free certification would instead return a verdict on the full circuit at a chosen tolerance.

Beyond such demonstrations, the protocol can be used in several ways. Since the traps never require the target circuit to be run, a device can be certified before it is put into operation, for instance prior to delivery, by executing the traps across the range of operating conditions, such as temperature or humidity, under which the target is later meant to run, rather than interleaved with it as assumed throughout this work. Once the device is in operation, repeating the test over time monitors whether it keeps meeting its tolerance as its calibration drifts. For circuits built from repeated primitives, such as the arithmetic of the factoring algorithm discussed in Sec.~\ref{main_results}, each primitive can be certified separately, locating any component that fails the tolerance. Moreover, since the verdict refers to a tolerance chosen by the user rather than to any feature of the hardware, running the same test on different devices compares them on equal terms, in the spirit of cross-platform verification \cite{Elben2020}.

Several extensions of the framework are worth pursuing. First, trap executions discard most of the information they produce, since each returns a single bit, and the syndrome records collected along the way are used only for decoding. The per-qubit and per-Pauli rates could identify which logical qubit or error channel is responsible for a failed certification, at the cost of additional shots to resolve them individually, and the three rates of each qubit determine the weights of its effective single-qubit logical Pauli channel. Traps stabilized by multi-qubit operators, introduced in the adversarial setting in Ref.~\cite{Sater2026}, would further give access to correlations between logical qubits, and combining trap outcomes with the syndrome data exploited by noise-learning methods \cite{Wagner2022paulichannelscanbe,Wagner2023,Zheng2026,xiao2026,Pancotti2026} could deliver, from the same executions, both a certificate and a characterization of the logical noise. Second, A2 delegates the certification of the magic-state preparation stage to an independent procedure. Integrating a characterization of the injected states into the protocol, for instance through dedicated benchmarks of logical magic states run on the same device \cite{Lee2025magicbenchmark}, would extend the certificate to the resource states themselves, and would allow the framework to be extended to architectures that do not implement these resources fault-tolerantly, such as those that error-correct Clifford operations but inject non-Clifford rotations without distillation \cite{Akahoshi2024,Toshio2025,Ismail2026}. In each case, the traps remain the same object, stabilizer circuits sharing the target's noise, and what changes is only how much of the information they carry is used.

\begin{acknowledgements}

GPS acknowledges T. Hashizume for useful discussions. GPS and DJ acknowledge support from DFG project ‘Quantencomputing mit neutralen Atomen’ (JA 1793/1-1, Japan-JSTDFG-ASPIRE 2024). SP and DJ acknowledge support from the Hamburg Quantum Computing Initiative (HQIC) project EFRE. The EFRE project is co-financed by ERDF of the European Union and by “Fonds of the Hamburg Ministry of Science, Research, Equalities and Districts (BWFGB)”. DJ acknowledges support from the Cluster of Excellence ‘Advanced Imaging of Matter’ of the Deutsche Forschungsgemeinschaft (DFG) - EXC 2056 - project ID 390715994, the European Union’s Horizon Programme (HORIZONCL42021- DIGITALEMERGING-02-10) Grant Agreement 101080085 QCFD, and funding from the Federal Ministry of Research, Technology and Space (BMFTR) under the grant BeRyQC.

\end{acknowledgements}

\appendix

\section{Upper bound under residual coherent errors}\label{coherent_appendix}

Here, we relax A1, allowing the channel at each location to be an arbitrary completely positive trace-preserving (CPTP) quantum map rather than a stochastic Pauli channel, while retaining A2--A4. With these considerations, we generalize the upper bound to the logical error rate of the target circuit $p_L$ from Eq. \eqref{bound}, and show that it degrades additively rather than collapsing entirely.

We consider the target circuit with logical footprint $K=k_D+k_T$ and its trap variants to be composed of a set of $L$ ideal locations (gates, idle periods, and measurements), with each location $\ell$ followed by an arbitrary CPTP channel $\mathcal{E}_\ell$ describing noise. We denote the Pauli-twirled version of the channel $\mathcal{E}_\ell$ as $\mathcal{P}_\ell$.

To compare the actual error channel at location $\ell$ with its twirled counterpart, we make use of the diamond norm. The diamond norm between two quantum channels $\mathcal{F}$ and $\mathcal{G}$, each acting on states of a Hilbert space $\mathcal{H}_S$, is defined as
\begin{equation}\label{diamond_norm}
    \|\mathcal{F}-\mathcal{G}\|_\diamond \;=\; \max_{\sigma} \big\|(\mathcal{F}\otimes \mathbb{I}_R)(\sigma) - (\mathcal{G}\otimes \mathbb{I}_R)(\sigma)\big\|_1 ,
\end{equation}
where the identity $\mathbb{I}_R$ acts on states of a reference Hilbert space $\mathcal{H}_R$ with dimension $\dim(\mathcal{H}_R)=\dim(\mathcal{H}_S)$, $\sigma$ is a density matrix on the joint Hilbert space $\mathcal{H}_S\otimes\mathcal{H}_R$, and $\|X\|_1=\mathrm{Tr}\sqrt{X^\dagger X}$ is the trace norm. We then define the coherent residual at location $\ell$ to be
\begin{equation}
    \eta_\ell=\dfrac{1}{2}\|\mathcal{E}_\ell-\mathcal{P}_\ell\|_\diamond,
\end{equation}
and the total coherent residual as
\begin{equation}
    \eta=\sum_{\ell=1}^L\eta_\ell.
\end{equation}

Let $\mathcal{C}$ denote the quantum channel describing the successive action of every ideal circuit location followed by its associated noise channel $\mathcal{E}_\ell$, acting on an arbitrary input state $\rho$. We define $\mathcal{C}_j$ to be the circuit in which channels associated with locations $1,\dots,j$ have been replaced by their twirls, while those of locations $j+1,\dots,L$ remain unaltered, so that $\mathcal{C}_0=\mathcal{C}$ is the actual noisy circuit and $\mathcal{C}_L$ is its fully twirled counterpart, to which Theorem~\ref{main_theorem} applies. The action of the channel $\mathcal{C}_j$ on an arbitrary state $\rho$ is thus given by
\begin{equation}
    \mathcal{C}_j(\rho)=\mathcal{D}_j\left[\mathcal{P}_j(\rho_j)\right],
\end{equation}
where $\rho_j$ denotes the state produced by the first $j$ locations of the circuit, and $\mathcal{D}_j$ is the part of $\mathcal{C}_j$ that contains the remaining locations after the twirled channel $\mathcal{P}_j$ has been applied.

To quantify the degradation of the upper bound under residual coherent noise, we must bound the quantity
\begin{equation}\label{coherent_diff}
    \dfrac{1}{2}\big\|\mathcal{C}(\rho) - \mathcal{C}_L(\rho)\big\|_1.
\end{equation}
Consecutive members of the family of $\mathcal{C}_j$ channels differ only at a single error channel, which is what allows the deviation to be bounded one location at a time:
\begin{align}\label{local_bound}
    \dfrac{1}{2}\big\|\mathcal{C}_{j-1}(\rho) - \mathcal{C}_j(\rho)\big\|_1 &= \dfrac{1}{2}\big\|\mathcal{D}_j\left[\mathcal{E}_{j}(\rho_j) - \mathcal{P}_j(\rho_j)\right]\big\|_1 \notag\\
    & \leq\dfrac{1}{2}\big\|\mathcal{E}_{j}(\rho_j) - \mathcal{P}_j(\rho_j)\big\|_1 \notag \\
    & \leq \dfrac{1}{2}\big\|\mathcal{E}_{j} - \mathcal{P}_j\big\|_\diamond = \eta_j,
\end{align}
where in the first inequality we have used the contractivity of CPTP maps, and in the second inequality we have used the diamond norm definition from Eq. \eqref{diamond_norm}. With this, we can upper-bound Eq. \eqref{coherent_diff}:
\begin{align}\label{coherent_bound}
    \dfrac{1}{2}\big\|\mathcal{C}(\rho) - \mathcal{C}_L(\rho)\big\|_1&=\dfrac{1}{2}\left\|\sum_{j=1}^L\left[\mathcal{C}_{j-1}(\rho) - \mathcal{C}_j(\rho)\right]\right\|_1 \notag \\
    &\leq \dfrac{1}{2}\sum_{j=1}^L\left\|\mathcal{C}_{j-1}(\rho) - \mathcal{C}_j(\rho)\right\|_1 \notag \\
    &\leq\sum_{j=1}^L\eta_j=\eta.
\end{align}

Labeling the logical error rate of the target circuit that arises from all noise channels being perfectly Pauli-twirled as $p_L^P$, and reserving $p_L$ for the actual logical error rate, from Eq. \eqref{coherent_bound} we infer that 
\begin{equation}\label{eta_real}
    \abs{p_L-p_L^P}\leq\eta,
\end{equation}
since $p_L$ and $p_L^P$ are the probabilities that the two output states assign to the same measurement outcome, and the trace distance bounds this difference for any outcome. This also applies to trap circuits. Denoting by $p^P(\mathcal{T}_i)$ the rate at which an idealized, perfectly twirled trap flips its deterministic output, and $p(\mathcal{T}_i)$ the actual rate that can be measured in an experiment, we have
\begin{equation}\label{eta_traps}
    \abs{p(\mathcal{T}_i)-p^P(\mathcal{T}_i)}\leq\eta.
\end{equation}
The same $\eta$ applies in both equations, since target and trap circuits share the same set of circuit locations except at the resource-preparation slots, which is governed by A2.

Using Eqs. \eqref{eta_real} and \eqref{eta_traps}, we find the upper bound to the actual error rate $p_L$ arising from a stochastic Pauli channel plus a residual coherent contribution:
\begin{align}\label{theorem_generalization}
    p_L&\leq p_L^P+\eta\leq\dfrac{1}{2}\sum_{i=1}^{3K} p^P(\mathcal{T}_i)+\eta \notag \\
    &\leq\dfrac{1}{2}\sum_{i=1}^{3K} \left[p(\mathcal{T}_i)+\eta\right]+\eta=\bar p_L+\left(1+\dfrac{3K}{2}\right)\eta.
\end{align}

Equation \eqref{theorem_generalization} generalizes Theorem \ref{main_theorem} to arbitrary noise channels. The bound survives the failure of A1, acquiring an additive correction rather than failing, and Theorem \ref{main_theorem} is recovered for $\eta=0$. The quantity being bounded, $p_L$, is now the logical error rate of the target circuit under the noise the device actually exhibits, without the assumption that it is exactly Pauli. Its right-hand side retains the same content as before, since $\bar p_L$ is assembled from the trap failure rates measured on the device, and no additional executions are required. What the correction demands is an independent characterization of $\eta$, that is, quantifying how much the actual error channel differs from its ideal Pauli-twirled version. Certification therefore remains possible in the presence of residual coherent errors, at the price of complementing the trap data with a characterization of the residual. Moreover, since $\eta$ appears multiplied by the logical footprint, the characterization step must establish $K\eta$ to be small compared with the error tolerance being certified, so that the upper bound remains informative.

\section{Correction-bitstring statistics and the proof of Lemma 2}\label{lemma2}

In this appendix, we prove that, under assumptions A1--A4 and the trap construction of
Sec.~\ref{traps_subsection}, the target execution and every trap execution suffer
fault configurations, correction bitstrings, and decoder responses with
identical statistics at every location except the resource-preparation slots.
The proof certifies the three required elements: the fault statistics, the
correction bitstrings, and the decoder responses.

\emph{Fault statistics.}
Once the correction bitstring has been fixed, every location other than the resource-state
preparations carries the same physical instruction in the target and in every
trap: gate layers, idle periods, and measurement locations are identical by
construction of the trap circuits, which share the target's circuit
location set. The data-block preparations differ only in which stabilizer
basis state is prepared, and by A3 the preparation noise is independent of
this choice. Under A1, the noise channel attached to any location is fixed
independently of the state passing through it, a consequence of linearity,
since a location-attached channel acts identically on any input. The fault probabilities therefore agree location by location, and, since by A1 the fault distribution is determined by the sequence of executions, the distributions over full fault configurations coincide.

\emph{correction bitstrings.}
In a trap execution, the correction bitstring $\mathbf{c}$ is sampled
uniformly (or equivalently, derandomized over the $2^{k_T}$ correction bitstrings. Each trap is therefore sampled uniformly in its $2^{k_T}$
deterministic variants. Thus, it must be shown that
the target follows the same law. In the target, the recorded outcome can be written as
\begin{equation}\label{pattern_uniformity}
    a_j = \beta_j \oplus e_j,
\end{equation}
where $\beta_j$ is the intrinsic teleportation randomness and
$e_j \in \{0,1\}$ is the flip induced by the accumulated faults and the
decoder's response up to the measurement of ancilla $j$. The bit $\beta_j$
is uniform even in the absence of noise, since just before its measurement, the
injection ancilla is, at the logical level, maximally mixed in the measured
basis, so both outcomes occur with probability $1/2$. Moreover, $\beta_j$
originates from the quantum measurement itself and is independent of the
fault configuration and of all other teleportation bits. The XOR of a
uniform bit with any independent bit is uniform and independent of it, so
each $a_j$ is uniform and independent of the faults, and the full correction
pattern $\mathbf{a} = (a_1, \dots, a_{k_T})$ is uniform on
$\{0,1\}^{k_T}$ and independent of $F$, exactly the law of the sampled $\mathbf{c}$. Consequently, the correction slots carry the same instruction ($S$ or the identity) with the same statistics in both executions. If
corrections are tracked in software rather than applied physically, every
correction slot is idle in all executions and this step is trivial.

\emph{Decoder responses.}
The syndromes produced by a Pauli fault configuration depend only on the
faults and on the correction bitstring, never on the logical state being acted
on, and by A4 a single fixed decoder processes them in all executions. Given
the agreement of fault statistics and correction bitstrings established above,
the decoder responses therefore agree in distribution as well. Finally,
since target and trap shots are interleaved in random order, both sample the
same time-averaged noise, so all of the above refers to a single, common
noise realization.

Target and trap executions thus differ statistically only at the
resource-preparation slots, where the target injects magic states and the
traps stabilizer states. By A2, the injected magic states are no noisier
than the substituted stabilizer states.
\qed

\section{Calibration of the certification test}\label{test_calibration}

As detailed in Sec.~\ref{cert_protococol}, the certification protocol requires balancing the security bound $\delta$ and the yield bound $\gamma$ based on the operational target ratio $r$. Crucially, the determination of the optimal allowed failures $w$ is a scale-invariant problem, as it depends solely on the predefined statistical targets $\delta$ and $\gamma$ and the operational ratio $r$, remaining completely independent of the predefined logical error tolerance $p_{L}^*$.

Algorithm \ref{w_optimization} determines $w$ by evaluating the expected number of errors, $\lambda^*$ and $\lambda^t$, using the inverse $\chi^2$-CDF. These correspond, respectively, to the expected failure counts at the security threshold, Eq.~\eqref{security}, and at the operational target, Eq.~\eqref{yield}. Because the ratio $\lambda^t/\lambda^*$ is smaller than 1 for $w=0$, and monotonically increases with $w$, the procedure simply searches for the minimum integer $w$ that satisfies the operational constraint $\lambda^t/\lambda^* \ge r$. Only at the final step is $p_{L}^*$ introduced, in order to compute the total number of certification shots $M_C$ from Eq.~\eqref{qstar}.

\begin{algorithm}
\caption{Determine the optimal tolerance and number of certification shots}
\label{w_optimization}
\begin{algorithmic}[1]
\Require Global tolerance $p_{L}^*$, logical footprint $K$, significance level $\delta$, false-negative tolerance $\gamma$, operational ratio $r$
\Ensure Optimal tolerance $w$ and required certification shots $M_C$
\State $w \gets 0$
\While{true}
    \State $\lambda^* \gets \dfrac{1}{2}
        F^{-1}_{\chi^2}\!\left(1-\delta,\;2(w+1)\right)$
    \State $\lambda^t \gets \dfrac{1}{2}
        F^{-1}_{\chi^2}\!\left(\gamma,\;2(w+1)\right)$

    \If{$\dfrac{\lambda^t}{\lambda^*}
        \ge r$}
        \State \textbf{break}
    \EndIf
    \State $w \gets w + 1$
\EndWhile
\State $f^* \gets \dfrac{2p_{L}^*}{3K}$
\State $M_C \gets \left\lceil\dfrac{\lambda^*}{f^*}\right\rceil$
\State \Return $(w, M_C)$
\end{algorithmic}
\end{algorithm}

\section{Asymptotic scaling of the certification shot overhead}\label{shot_scaling}

For the protocol described in Sec.~\ref{cert_protococol}, the total shot overhead is governed by the parameter $\lambda^*$, which represents the expected number of aggregated failures at which the acceptance probability hits the significance level $\delta$. It is mathematically defined by the relation $F_P(w;\lambda^*)=\delta$, i.e. by the saturation of the security condition of Eq.~\eqref{security}.

To determine the asymptotic scaling behavior of the shot overhead, we can make use of the relationship between the Poisson and $\chi^2$-CDFs \cite{Johnson_Kotz_Kemp_2005}:
\begin{equation}\label{poisson-chi}
    F_P(w;\lambda) = 1 - F_{\chi^2}(2\lambda; 2(w+1)),
\end{equation}
where $F_{\chi^2}(2\lambda; 2(w+1))$ is the $\chi^2$-CDF with $2(w+1)$ degrees of freedom evaluated at $2\lambda$. The expected number of failures at the security threshold can then be obtained by evaluating the inverse $\chi^2$-CDF:
\begin{equation}
    \lambda^*=\dfrac{1}{2}F^{-1}_{\chi^2}(1-\delta;2(w+1)).
\end{equation}

To study the asymptotic scaling behavior for a large number of allowed failures ($w \gg 1$), we apply the normal approximation of the $\chi^2$-distribution for large degrees of freedom ($2(w+1) \gg 1$). The inverse CDF expands as $F_{\chi^2}^{-1}(1-p; 2(w+1)) \approx 2(w+1) + Z_{1-p}\sqrt{4(w+1)}$, where $Z_{1-p}$ is the $(1-p)$-quantile of the standard normal distribution \cite{Johnson_Kotz_Balakrishnan_1994}. Thus, $\lambda^*$ expands as:
\begin{equation}\label{lambda_expansion}
    \lambda^* \approx (w + 1) + Z_{1-\delta}\sqrt{w + 1}.
\end{equation}
This reveals that the parameter driving the total shot count scales roughly linearly with the allowed tolerance $w$.

However, we do not take $w$ as a free parameter, but it is dictated by the requirement to simultaneously satisfy the security bound $\delta$ and the yield bound $\gamma$ at the operational ratio $r=\lambda^t/\lambda^*<1$, as shown in Appendix \ref{test_calibration}. Applying the normal approximation to both parameters yields:
\begin{equation}
    r\approx\frac{w - Z_{1-\gamma}\sqrt{w}}{w + Z_{1-\delta}\sqrt{w}}.
\end{equation}
From this equation we can isolate how the number of allowed failures scales with the operational ratio as $w\sim \mathcal{O}\left(1/(1-r)^2\right)$. The number of allowed failures quantifies the resolution penalty of the protocol. As the target hardware error rate approaches the tolerance limit $(r\to1)$, the required $w$ and consequently the total number of shots diverges towards infinity. Substituting the scaling of $w$ back in $M_C=\lceil\lambda^*/f^*\rceil$, and using the scaling of normal quantiles $Z_{1-p}\sim \mathcal{O}(\sqrt{-\ln p})$ \cite{Johnson_Kotz_Balakrishnan_1994}, we obtain the ultimate resource scaling of the certification protocol:
\begin{equation}
    M_{C}\sim O\left(\dfrac{K}{p_L^*}\dfrac{(\sqrt{-\ln\delta}+\sqrt{-\ln\gamma})^2}{(1-r)^2}\right).
\end{equation}

\bibliography{ref}

\end{document}